\documentclass[conference]{IEEEtran}
\usepackage[ruled,vlined]{algorithm2e}
\usepackage{cite}
\usepackage{graphicx}
\usepackage{psfrag}
\usepackage{subfigure}
\usepackage{url}
\usepackage{amsmath}
\usepackage{yhmath}
\usepackage{array}
\usepackage{amssymb}
\usepackage{amsfonts}
\usepackage{graphicx}
\usepackage{epstopdf}
\usepackage{multirow}

\newtheorem{lemma}{Lemma}

\newtheorem{corollary}{Corollary}

\newtheorem{definition}{Definition}
\newtheorem{theorem}{Theorem}
\newtheorem{example}{Example}

\title{Wireless Network-Coded Multi-Way Relaying Using Latin Hyper-Cubes}
\begin{document}

\author{
\authorblockN{Srishti Shukla and B. Sundar Rajan}\\
\authorblockA{Email: {$\lbrace$srishti, bsrajan$\rbrace$} @ece.iisc.ernet.in\\
IISc Mathematics Initiative (IMI), Dept. of Mathematics and Dept. of Electrical Comm. Engg., IISc, Bangalore\\
}
}

\maketitle
\begin{abstract}
Physical layer network-coding for the $n$-way wireless relaying scenario is dealt with, where each of the $n$ user nodes $X_1,$ $X_2,...,X_n$ wishes to communicate its messages to all the other $(n-1)$ nodes with the help of the relay node R. The given scheme, based on the denoise-and-forward scheme proposed for two-way relaying by Popovski et al. in \cite{PoY1}, employs two phases: Multiple Access (MA) phase and Broadcast (BC) phase with each phase utilizing one channel use and hence totally two channel uses. Physical layer network-coding using the denoise-and-forward scheme was done for the two-way relaying scenario in\cite{KPT}, for three-way relaying scenario in \cite{SVR}, and for four-way relaying scenario in \cite{ShR}. This paper employs denoise-and-forward scheme for physical layer network coding of the $n$-way relaying scenario illustrating with the help of the case $n = 5$ not dealt with so far. It is observed that adaptively changing the network coding map used at the relay according to the channel conditions reduces the impact of multiple access interference which occurs at the relay during the MA phase. These network coding maps are chosen so that they satisfy a requirement called \textit{exclusive law}. We show that when the $n$ users transmit points from the same $M$-PSK $(M=2^{\lambda})$ constellation, every such network coding map that satisfies the exclusive law can be represented by a $n$-fold Latin Hyper-Cube of side $M$. The singular fade subspaces resulting from the scheme are described and enumerated for general values of $n$ and $M$ and are classified based on their removability in the given scenario. A network code map to be used by the relay for the BC phase aiming at reducing the effect of interference at the MA stage is obtained. 
\end{abstract} 

\vspace{-0.15 cm}
\section{Background And Preliminaries}
The two-stage protocol for physical layer network coding for the two-way relay channel first introduced in \cite{ZLL}, exploits the multiple access interference occurring at the relay so that the communication between the end nodes can be done using a two stage protocol. The works in \cite{KMT}, \cite{PoY} deal with the information theoretic studies for bidirectional relaying. In \cite{KPT}, modulation schemes to be used at the nodes for uncoded transmission for the two-way relaying were studied.

The work done for the relay channels with three or more user nodes is given in \cite{SVR,ShR,LiA,PiR,PaO,JKPL}. In \cite{LiA}, authors have proposed a two stage operation for three-way relaying called joint network and superposition coding, in which the three users transmit to the relay node one-by-one in the first phase, and the relay node makes two superimposed XOR-ed packets and transmits back to the users in the BC phase. The packet from the node with the worst channel gain is XOR-ed with the other two packets. The protocol employs four channel uses, three for the MA phase and one for the BC phase. It is claimed by the authors that this scheme can be extended to more than three users as well. In the work by Pischella and Ruyet in \cite{PiR} a lattice-based coding scheme combined with power control, composed of alternate MA and BC phases, consisting of four channel uses for three-way relaying is proposed. The relay receives an integer linear combination of the symbols transmitted by the user nodes. It is stated that the scheme can be extended to more number of users. These two works essentially deal with the information theoretic aspects of multi-way relaying. An `opportunistic scheduling technique' for physical network coding is proposed by authors Jeon et al. in \cite{JKPL}, where using a channel norm criterion and a minimum distance criterion, users in the MA as well as the BC phase are selected on the basis of instantaneous SNR. This approach utilizes six channel uses in case of three-way relaying and it is mentioned that the approach can be extended to more number of users. In \cite{PaO}, a `Latin square-like condition' for the three-way relay channel network code is proposed and cell swapping techniques on these Latin Cubes are suggested in order to improve upon these network codes. The protocol employs five channel uses, and the channel gains associated with the channels are not considered in the construction of this network coding map. 

\begin{figure}[tp]
\center
\includegraphics[height=35mm]{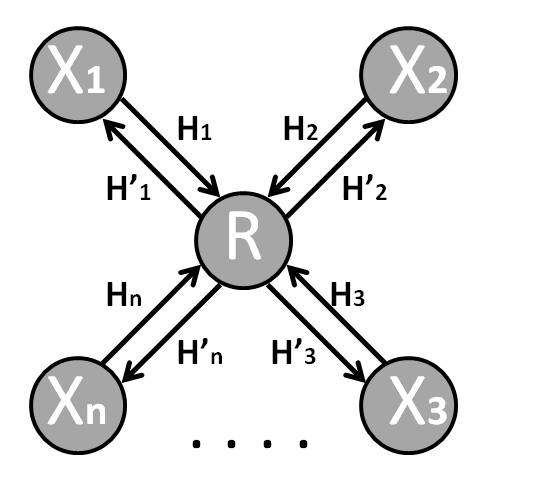}
\vspace{-.2 cm}
\caption{An $n$-way relay channel}
\vspace{-0.8 cm}
\end{figure}

We consider the $n$-way wireless relaying scenario shown in Fig. 1, where $n$-way data transfer takes place among the nodes $X_1$, $X_2$,..., $X_n$ with the help of the relay R assuming that the $n$ nodes operate in half-duplex mode. The relaying protocol consists of two phases, \textit{multiple access} (MA) phase, consisting of one channel use during which $X_1$, $X_2$,..., $X_n$ transmit to R; and \textit{broadcast} (BC) phase, in which R transmits to $X_1$, $X_2$,..., $X_n$ in a single channel use. Network Coding is employed at R in such a way that $X_i$ can decode $X_j$'s message for $i,j=1,2,...,n$ and $j \neq i$, given that $X_i$ knows its own message. Latin Cubes have been explored as a tool to find the network coding map used by the relay, depending on the channel gain in \cite{SVR}. The throughput performance of the two stage protocol for three-way relaying given in \cite{SVR} is better than the throughput performance of the `opportunistic scheduling technique' given in \cite{JKPL} at high SNR, as can be observed from the plots given in \cite{SVR}. The work in \cite{ShR} further extends the approach used in \cite{SVR} to four-way relaying and employs two channel uses for the entire information exchange amongst the four users, which makes the throughput performance of the scheme better than the other existing schemes. This scheme that utilizes two channel uses for the entire information exchange between three and four users using a relay in \cite{SVR} and \cite{ShR} respectively, is extended to $n$ users in this paper, for the case when $M$-PSK is used at the end nodes.

For our physical layer network coding strategy we use the mathematical structure called a Latin Hyper-Cube defined as follows:
\begin{definition}
An \textit{n-fold Latin Hyper-Cube L of r-th order of side $M$} \cite{Kis} is an $M \times M \times ... \times M ~(n~times) $ array containing $M^n$ entries, $M^{n-r}$ of each of $M^r$ kinds, such that each symbol occurs at most once for each value taken by each dimension of the hyper-cube. \footnote{The definition has been modified slightly from the referred article ``On Latin and Hyper-Graeco-Latin Cubes and Hyper Cubes'' by K. Kishen (Current Science, Vol. 11, pp. 98--99, 1942), in accordance with the context.}
\end{definition}

For our purposes, we use only $n$-fold Latin Hyper-Cubes of side $M$ on the symbols from the set $\mathbb{Z}_{t}=\left\{0,1,2,...,t-1\right\}$, $t \geq M^{n-1}$. 
\vspace{-0.15 cm}

\section{Signal Model}
\noindent \textbf{\textit{Multiple Access (MA) Phase:}}\\
\indent Suppose user node $X_k$ for $k=1,2,...,n$ wants to send a $\lambda $-bit binary tuple to all user nodes $X_l$ for $l=1,2,...,n$ and $l \neq k$. The symmetric $M$-PSK $(M=2^ \lambda)$ constellation, given by $\mathcal{S}=\left\{e^{2 \pi k /M} | k=0,1,...,M-1 \right\}$ is used at $X_1$, $X_2$,..., $X_n$, and $ \mu : \mathbb{F}^{\lambda}_{2} \rightarrow \mathcal{S} $ denotes the map from bits to complex symbols used at $X_1$, $X_2$,..., $X_n$ where $\mathbb{F}_{2}=\left\{0,1\right\}$. Let $ x_{1}=\mu\left(s_{1}\right), x_{2}=\mu\left(s_{2}\right),..., x_{n}=\mu\left(s_{n}\right) \in \mathcal{S}$ denote the complex symbols transmitted by $X_1$, $X_2$,..., $X_n$ respectively, where $s_{1}, s_{2}, ..., s_n \in \mathbb{F}^{\lambda}_{2}$. Here, we assume that the Channel State Information (CSI) is not available at the transmitting nodes and perfect CSI is available at the receiving nodes. The received signal at R in the MA phase is given by,
{\vspace{-.25 cm}
\begin{equation}
\label{yr}
Y_{R}=H_{1}x_{1}+H_{2}x_{2}+...+H_{n}x_{n}+Z_{R},
\vspace{-0.25 cm}\end{equation}}where $H_{1}$, $H_{2}$,..., $H_{n}$ are the fading coefficients associated with the $X_1$-R, $X_2$-R,..., $X_n$-R link respectively. The additive noise $Z_{R} \sim \mathcal{CN}\left(0,\sigma^2 \right)$, where $\mathcal{CN}\left(0,\sigma^2 \right)$ denotes the circularly symmetric complex Gaussian random variable with variance $\sigma^2$.

%
The effective constellation seen at the relay during the MA phase, denoted by $ \mathcal{S}_{R} \left( H_1, H_2,..., H_n \right)$, is given by,\\ {\footnotesize $\mathcal{S}_{R} \left( H_1, H_2,..., H_n \right) = \left\{H_1 x_{1} + H_2 x_{2} +...+ H_n x_{n}| x_{1}, x_{2},..., x_n \in \mathcal{S}\right\}.$}

The minimum distance between the points in the constellation $ \mathcal{S}_{R} \left( H_1, H_2, ..., H_n \right) $ denoted by $d_{min}\left(H_1, H_2, ..., H_n\right)$ is given in (\ref{dist}) on the next page. From (\ref{dist}), it is clear that there exists values of $(H_1, H_2, ..., H_n)$, for which $d_{min}\left(H_1, H_2,..., H_n\right)=0$.

\begin{definition}
A fade state $(H_1,H_2,..., H_n)$ is defined to be a \textit{singular fade state} for the MA phase of $n$-way relaying, if $d_{min}\left(H_1, H_2,..., H_n\right)=0$. Let $\mathcal{H}=\left\{ (H_1, H_2,..., H_n) \in \mathbb{C}^n | d_{min}\left(H_1, H_2, ..., H_n\right)=0 \right\}$ denote the set of all singular fade states. For singular fade states, $\left|\mathcal{S}_{R} \left( H_1, H_2, ..., H_n \right)\right| < M^{n}$.
\end{definition}

Let the Maximum Likelihood (ML) estimate of $\left(x_{1}, x_{2},..., x_n\right) $ be denoted by $\left(\hat{x}_{1}, \hat{x}_{2},..., \hat{x}_n\right) \in \mathcal{S}^{n}$ at R based on the received complex number $Y_{R}$, i.e., 

{\vspace{-.25 cm}
\footnotesize
\begin{equation}
\left(\hat{x}_{1}, \hat{x}_{2},..., \hat{x}_n\right)=\arg \min_{\left({x_{1}}, {x_{2}}, ..., x_n\right) \in \mathcal{S}^{n}}\left\|Y_R - HX\right\|, 
\vspace{-.25 cm} \end{equation}}where $ H=\left[H_{1}\  H_{2}\  ... \ H_n \right]$  and $ X=\left[ x_{1}\ x_{2}\ ...\ x_n \right]^T.$

\begin{figure*}
\footnotesize
\begin{align}
&
\label{dist}
d_{min}(H_1, H_2, ..., H_n)=\hspace{-0.5 cm}\min_{\substack {{(x_1,x_2,...,x_n),(x'_1,x'_2,...,x'_n) \in \mathcal{S}^{n}} \\ {(x_1,x_2,...,x_n) \neq (x'_1,x'_2,...,x'_n)}}}\hspace{-0.5 cm}\vert H_1 \left(x_1-x'_1\right)+H_2 \left(x_2-x'_2\right) + ... + H_n \left(x_n-x'_n\right)\vert \\
\hline
\vspace{1cm}
&
\label{cl1}
d_{min}^{\mathcal{L}_{i},\mathcal{L}_{j}}\left(H_1, H_2,..., H_n\right)=\hspace{-0.2 cm}\min_{\substack {{(x_1,x_2,...,x_n) \in \mathcal{L}_{i}},\\ (x'_1,x'_2,...,x'_n) \in \mathcal{L}_{j}}} \hspace{-0.2 cm}  \left| H_1 \left( x_1-x'_1\right)+ H_2 \left(x_2-x'_2\right) + ... + H_n \left(x_n-x'_n\right) \right| \\
\hline
\vspace{1cm}
&
\label{cl2}
d_{min} \left(\mathcal{C}^{H_1, H_2, ..., H_n}\right)=\hspace{-1.7 cm}\min_{\substack {{(x_1,x_2,...,x_n),(x'_1,x'_2,..., x'_n) \in \mathcal{S}^{n},} \\ {\mathcal{M}^{H_1, H_2, ..., H_n}(x_1,x_2,...,x_n) \neq \mathcal{M}^{H_1, H_2, ...,H_n}(x'_1,x'_2,...,x'_n)}}}\hspace{-1.7 cm} \left| H_1 \left( x_1-x'_1\right)+ H_2 \left(x_2-x'_2\right) + ... + H_n \left(x_n-x'_n\right) \right| \\
\hline
\vspace{0.cm}
&
\label{cl3}
d_{min} \left(\mathcal{C}^{\left\{\left(H_1, H_2, ...,H_n\right)\right\}} , H_1, H_2, ..., H_n \right)=\hspace{-2.9 cm}\min_{\substack {\vspace{0.15cm} {(x_1,x_2,...,x_n),(x'_1,x'_2,...,x'_n) \in \mathcal{S}^{4},} \\ {\mathcal{M}^{H_1,H_2,...,H_n}(x_1,x_2,...,x_n) \neq \mathcal{M}^{H_1, H_2, ...,H_n}(x'_1,x'_2,...,x'_n)}}}\hspace{-2.6 cm}  \left| H_1 \left( x_1-x'_1\right)+ H_2 \left(x_2-x'_2\right) + ... + H_n \left(x_n-x'_n\right)\right|\\
\hline
\vspace{0.cm}
&
\label{mel}
\mathcal{M}_{k}^{H_1,H_2,..., H_k,...,H_n}\left(x_{1},x_{2},...,x_k,...,x_n\right) \neq \mathcal{M}_{k}^{H_1,H_2,...,H_k,...H_n}\left(x'_{1},x'_{2},...,x'_k,...,x'_n\right),\\
\vspace{0.cm}
\nonumber
&
where, ~x_k = x'_k, \left(x_{1},x_{2},...,x_{k-1},x_{k+1},...,x_n\right) \neq \left(x'_{1},x'_{2},...,x'_{k-1},x'_{k+1},...,x'_n\right), \forall x_{1}, x'_{1},x_{2}, x'_{2},..., x_{n}, x'_{n}  \in \mathcal{S} ~for ~k=1,2,...,n.\\
\hline
\nonumber
\end{align}
\vspace{-1.4cm}
\end{figure*}

\noindent \textbf{\textit{Broadcast (BC) Phase:}}\\
\indent During the BC phase, the received signals at $X_1,~ X_2, ..., ~X_n$ are respectively given by, 
{\vspace{-.2 cm}
\begin{equation}
 Y_{X_k}=H'_{k}X_{R}+Z_{k}, ~k=1,2,...,n;
\vspace{-.5 cm} \end{equation}}

\noindent where $X_{R}=\mathcal{M}^{H_1, H_2, ..., H_n}\left(\left(\hat{x}_{1},\hat{x}_{2},...,\hat{x}_{n}\right)\right) \in \mathcal{S}^{'}$ denotes the complex number transmitted by R and $H_{1}^{'},$ $H_{2}^{'},$..., $H_{n}^{'}$ respectively are the fading coefficients corresponding to the links R-$X_1$, R-$X_2$, ..., R-$X_n$. The additive noises $Z_{1},$ $Z_{2},$,..., $Z_{n}$ are  $\mathcal{CN}\left(0,\sigma^{2}\right)$. During the \textit{BC} phase, R transmits a point from a signal set $\mathcal{S}^{'}$ given by a many to one map $\mathcal{M}^{H_1, H_2, ..., H_n} : \mathcal{S}^n \rightarrow \mathcal{S}^{'} $ chosen by R, depending on the values of $H_1$, $H_2$, ..., $H_n$. The cardinality of $\mathcal{S}^{'} \geq 2^{\lambda(n-1)}$, since $\lambda (n-1)$ bits about the other $(n-1)$ users needs to be conveyed to each of $X_1$, $X_2$,..., $X_n$.

A \textit{cluster} is the set of elements in $\mathcal{S}^n $ which are mapped to the same signal point in $\mathcal{S}^{'}$ by the map $\mathcal{M}^{H_1, H_2, ..., H_n}$. Let $\mathcal{C}^{H_1, H_2, ..., H_n}=\left\{\mathcal{L}_{1}, \mathcal{L}_{2},..., \mathcal{L}_{l}\right\}$ denote the set of all such clusters.

\begin{definition}
The \textit{cluster distance} between a pair of clusters $\mathcal{L}_i, ~\mathcal{L}_j \in \mathcal{C}^{H_1, H_2, ..., H_n}$, as given in (\ref{cl1}) on the next page, is the minimum among all the distances calculated between the points $\left(x_{1}, x_{2}, ..., x_n\right) \in \mathcal{L}_{i}$ and $\left(\acute{x_{1}}, \acute{x_{2}},..., \acute{x_{n}}\right) \in \mathcal{L}_{j}$ in the effective constellation seen at the relay node R. The minimum among all the cluster distances among all pairs of clusters of a clustering $\mathcal{C}^{H_1, H_2,..., H_n}$ is the \textit{minimum cluster distance} of the clustering, as given in (\ref{cl2}) on the next page.
\end{definition}

During the MA phase, the performance depends on the minimum cluster distance, while during the BC phase, the performance is dependent on the minimum distance of the signal set $\mathcal{S}^{'}$. \textit{Distance shortening}, a phenomenon given in \cite{NMR}, is described as the significant reduction in the value of $d_{min}\left(H_1, H_2, ..., H_n \right)$ for values of $\left(H_1, H_2,..., H_n\right)$ in the neighborhood of the singular fade states. If the clustering used at the relay node R in the BC phase is chosen such that $d_{min}(\mathcal{C}^{H_1, H_2, ..., H_n})$ is non zero, then the effect of distance shortening can be avoided.

A clustering $\mathcal{C}^{H_1, H_2, ..., H_n}$ is said to remove a singular fade state $\left(H_1, H_2, ..., H_n\right) \in \mathcal{H}, $ if $d_{min}\left(\mathcal{C}^{H_1, H_2, ..., H_n}\right)>0$, i.e., any two message sequences $\left(x_{1},x_{2},...,x_n\right) \in \mathcal{S}^n$ that coincide in the effective constellation received at the relay during the MA phase is in the same cluster of $\mathcal{C}^{H_1, H_2, ..., H_n}$. So, removing singular fade states for a $n$-way relay channel can alternatively be defined as:
\begin{definition}
A clustering $\mathcal{C}^{H_1, H_2, ..., H_n}$ is said to \textit{remove the singular fade state} $\left(H_1, H_2, ..., H_n\right) \in \mathcal{H}$, if any two possibilities of the messages sent by the users $\left(x_{1},x_{2},...,x_n\right), \left(x'_{1},x'_{2},...,x'_n\right) \in \mathcal{S}^{n}$ that satisfy

{\footnotesize  $$ H_1 x_1+ H_2 x_2+ ...+H_n x_n=H_1 x'_1+ H_2 x'_2+ ... +H_n x'_n $$} are placed together in the same cluster by the clustering.
\end{definition}

We denote the clustering which removes the singular fade state $\left(H_1, H_2, ..., H_n\right)$  by $\mathcal{C}^{\left\{\left(H_1, H_2, ..., H_n\right)\right\}} $ (selecting one randomly if there are multiple clusterings which remove the same singular fade state $\left(H_1, H_2, ..., H_n\right)$). Let the set of all such clusterings be denoted by $\mathcal{C_{H}}$, i.e., $\mathcal{C_{H}}=\left\{\mathcal{C}^{\left\{\left(H_1, H_2, ..., H_n\right)\right\}} : \left(H_1, H_2, ..., H_n\right)\in \mathcal{H}\right\} $.

\begin{definition}
The minimum cluster distance of the clustering $\mathcal{C}^{\left\{\left(H_1, H_2, ..., H_n\right)\right\}}$ for $\left(H_1, H_2, ...,H_n\right) \in \mathcal{H}$, when the fade state $(H_1, H_2, ..., H_n)$ occurs in the MA phase, denoted by $d_{min}\left(\mathcal{C}^{\left\{\left(H_1, H_2, ..., H_n\right)\right\}},H_1, H_2, ..., H_n\right)$, is the minimum among all its cluster distances.
\end{definition}

If $\left(H_1, H_2, ..., H_n\right) \notin \mathcal{H}, $ the clustering $\mathcal{C}^{H_1, H_2, ..., H_n} $ is chosen to be $\mathcal{C}^{\left\{\left(H_1, H_1, ..., H_n\right)\right\}} \in \mathcal{C}_{\mathcal{H}}$, that satisfies, \\
{$ d_{min}\left(\mathcal{C}^{\left\{\left(H_1, H_2, ..., H_n\right)\right\}}, H_1, H_2, ..., H_n\right) \geq $\\
$~~~~~~~~~~~~~~~~~~~~~~~~~~~~d_{min} \left(\mathcal{C}^{\left\{\left(H'_1, H'_2, ..., H'_n\right)\right\}},H_1, H_2, ..., H_n\right),$\\
$\forall \left(H_{1}, H_{2}, ..., H_n\right) \neq \left(H'_{1}, H'_{2}, ..., H'_n\right) \in \mathcal{H}$}.
The clustering used by the relay is indicated to $X_1$, $X_2$,...,$X_n$ using overhead bits. In order to ensure that $X_k;~k=1,2,..,n$ is able to decode the message sent by $X_l;~l=1,2,..,n;~l \neq k$, the clustering $\mathcal{C}$ should satisfy the exclusive law, as given in (\ref{mel}). We explain Exclusive Law in more detail in the next section.

%
The contributions of this paper are as follows: 
\begin{itemize}
\item We propose a scheme that enables the exchange of information in the wireless $n$-way relaying scenario when $M$-PSK is used at the $n$ user nodes with totally two channel uses while attempting to remove the harmful effects of fading, extending the schemes given in \cite{KPT}, \cite{SVR}, \cite{ShR} for $n=2,3,4$ respectively. 
\item For this scheme, the singular fade spaces are identified, enumerated and classified based on their removability in the given scenario. 
\end{itemize}
The remaining content is organized as follows: Section III demonstrates how a $n$-fold Latin Hyper-Cube of side $M$ can be utilized to represent the network code that satisfies the exclusive law for $n$-way relaying when $M$-PSK is used at the end nodes. In Section IV we describe and enumerate the singular fade subspaces for the given scenario and in Section V, focus in on the removal of such singular fade subspaces using $n$-fold Latin Hyper-Cube of side $M$. Section VI provides some insights using simulations and Section VII concludes the paper. 
%
\section{The Exclusive Law and Latin Hyper-Cubes}

The clustering $\mathcal{C}$ that represents the map used at the relay should satisfy the exclusive law \cite{KPT} in order to ensure that $X_k;~k=1,2,...,n$ is able to decode the message sent by $X_l;~l=1,2,...,n;~l \neq k$, where we assume that the nodes $X_1,~X_2,...,~X_n$ transmit symbols from the $M$-PSK constellation. Consider a $M \times M \times ... \times M ~(n~times) $ array, containing $M^n$ entries indexed by $\left(x_{1}, x_{2}, ..., x_n\right)$, i.e., the $n$ symbols sent by $X_1,~X_2,...,~X_n$ in the MA phase. For $k=1,2,...,n$, fixing the $k^{th}$ dimension of this $M \times M \times ... \times M ~(n~times) $ array, the $M$  $(n-1)$ dimensional arrays obtained, denoted by say $ \mathcal{C}_{k}^{l}, ~l=1,2,...,M$, are indexed by the $M$ values taken by $x_{k}$. For fixed values of $k$ and $l$, the repetition of a symbol in $\mathcal{C}_{k}^{l}$ results in the failure of the $k^{th}$ exclusive law given by (\ref{mel}). Thus, for the exclusive law to be satisfied, the cells of this array should be filled such that the $M \times M \times ... \times M ~(n~times) $ array so obtained, is a $n$-fold Latin Hyper-Cube of side $M$, with entries from $\mathbb{Z}_t=\left\{ 0,1,...,t-1 \right\}$ for $t \geq M^{(n-1)}$ (Definition 1). The symbol $\mathcal{L}_i$ of a particular clustering $\left\{\mathcal{L}_1,...,\mathcal{L}_t\right\}$ denotes the cluster obtained by putting together all the tuples $\left(x_1,x_2,...,x_n\right)\in \mathcal{S}^n $ such that the entry in the $\left(x_1,x_2,...,x_n\right)$-th slot is the same entry $i$ from $\mathbb{Z}_{t}$. The adjoining figures Fig. \ref{fig:three} and Fig. \ref{fig:four} show the exclusive law condition for the three-way and four-way relaying scenario when 4-PSK is used at end nodes.

\begin{figure}[ht]
\center
\vspace{-.2cm}
\includegraphics[height=25mm]{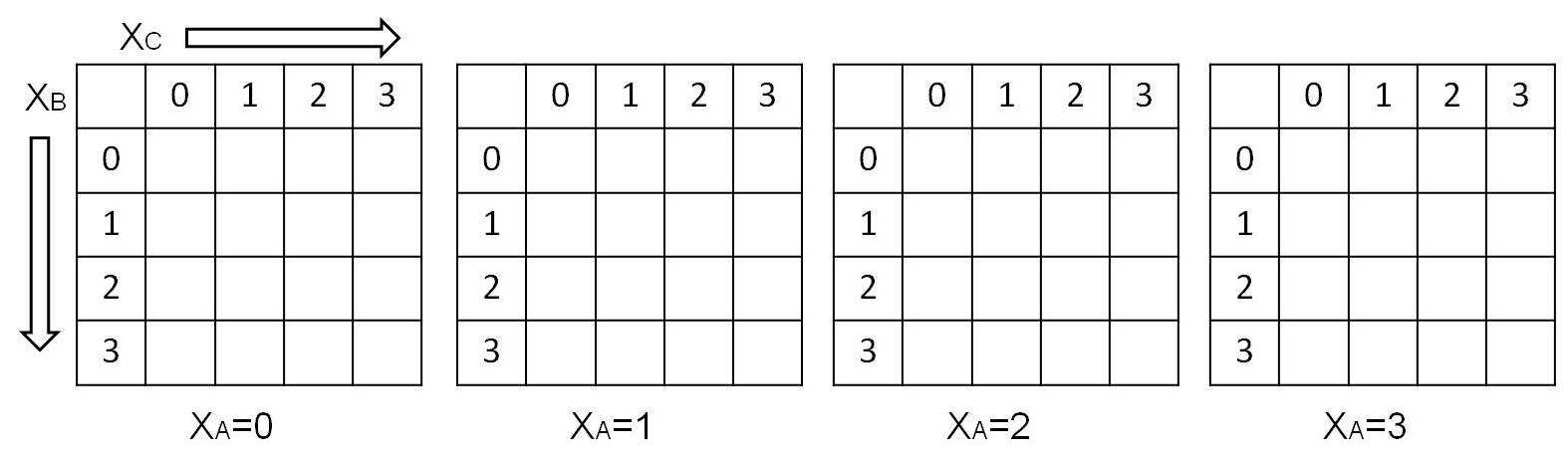}
\vspace{-.4cm}
\caption{A 4-fold Latin Hyper-Cube of side 3 represents the exclusive law constraint for the relay map when 4-PSK is used at end nodes}
\label{fig:three}
\vspace{-.5cm}
\end{figure}

\begin{figure}[ht]
\center
\vspace{-.2cm}
\includegraphics[height=32mm]{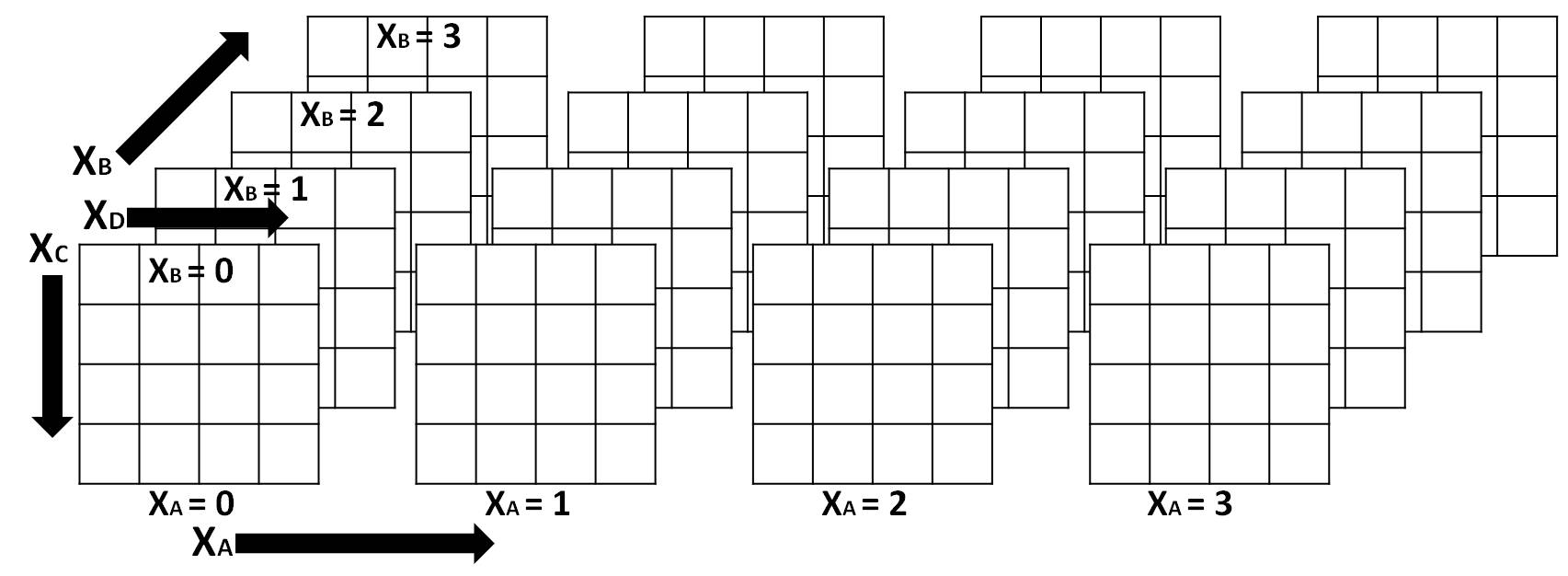}
\vspace{-.4cm}
\caption{A 4-fold Latin Hyper-Cube of side 4 represents the exclusive law constraint for the relay map when 4-PSK is used at end nodes}
\label{fig:four}
\vspace{-.5cm}
\end{figure}
\vspace{-.cm}

\section{Singular Fade Subspaces}

\begin{definition}
A set $\left\{(x_1, x_2,...,x_n)\right\} \in \mathcal{S}^n $ consisting of all the possibilities of $(x_1, x_2, ..., x_n)$ that must be placed in the same cluster of the clustering used at relay node R in the BC phase in order to remove the singular fade state $\left(H_1, H_2, ..., H_n\right)$, is referred to as a \textit{Singularity Removal Constraint} for the singular fade state $\left(H_1, H_2, ..., H_n\right)$ for $n$-way relaying scenario.
\end{definition}

At the end of the MA phase, the relay node receives a complex number, given by (\ref{yr}). Using the ML estimate of this received complex number, R transmits a point from the constellation $\mathcal{S}'$ with cardinality at most $M^{n}$. Instead of R transmitting a point from the $M^{n}$ point constellation resulting from all the possibilities of $\left(x_{1}, x_{2}, ..., x_n\right)$, depending on the fade states, the relay R can choose to group these possibilities into clusters represented by a smaller constellation, so that the minimum cluster distance is non-zero, as well as all the users receive the messages from the other $(n-1)$ users, i.e., the clustering satisfies the exclusive law. We provide one such clustering for the case of $n$-way relaying in the following. 

Suppose the fade coefficient in the MA phase, $(H_1, H_2,..., H_n) $, is a singular fade state, and ${\Gamma}$ is a singularity removal constraint corresponding to the singular fade state $(H_1, H_2, ..., H_n) $. Then there exist $(x_1, x_2, ..., x_n), (x'_1, x'_2, ...,x'_n) \in \Gamma $, $(x_1, x_2, ..., x_n) \neq (x'_1, x'_2, ..., x'_n) $ such that, 

{\footnotesize
\vspace{-.2 cm}
\begin{align}
\nonumber
& H_1 x_1+ H_2 x_2+ ...+ H_n x_n=H_1 x'_1+ H_2 x'_2+... + H_n x'_n \\
\nonumber
\Rightarrow & H_1 (x_1-x'_1)+ H_2 (x_2-x'_2)+ ...+ H_n (x_n-x'_n)=0 \\
\Rightarrow & (H_1, H_2, ..., H_n) \in 
\left\langle  \left[ {\begin{array}{cc}
x_{1}-x'_1 \\
x_{2}-x'_2 \\
.. \\
x_{n}-x'_n \\
\end{array} } \right] \right\rangle ^{\bot} 
\label{nullspacen} 
\vspace{-.2 cm}
\end{align}}where for a $n \times 1$ non-zero vector $v$ over $\mathbb{C}$,

{
\footnotesize
\vspace{-0.6 cm}
\begin{equation}
\left\langle v\right\rangle ^{\bot}=\left\{w=(w_1,w_2,...,w_n) \in \mathbb{C}^n ~|~ w_1v_1+w_2v_2+...+w_n v_n=0 \right\}.
\vspace{-. cm}
\end{equation}}

Note that $w_1v_1+w_2v_2+...+w_n v_n= \sum_{\substack{i}} w_i v_i $ is the dot product over $\mathbb{C}$ (and not an inner product over $\mathbb{C}$).
\begin{lemma} For a $n \times 1$ non-zero vector $v$ over $\mathbb{C}$, $\left\langle v\right\rangle ^{\bot}$ is a $(n-1)$-dimensional vector subspace of $\mathbb{C}^n$ over $\mathbb{C}$. \footnote{The proof is straightforward, yet given here for the sake of completeness.}
\end{lemma}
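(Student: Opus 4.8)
The plan is to realize $\langle v\rangle^{\bot}$ as the kernel of a single non-zero linear functional on $\mathbb{C}^n$ and then apply the rank--nullity theorem. I would deliberately avoid invoking orthogonal-complement theorems for inner product spaces, since here $w_1v_1+\cdots+w_nv_n$ is a symmetric bilinear (dot) product and not the Hermitian inner product on $\mathbb{C}^n$; the rank--nullity argument, being purely linear-algebraic, is unaffected by this distinction.

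First I would check that $\langle v\rangle^{\bot}$ is a subspace. Define $\phi:\mathbb{C}^n\to\mathbb{C}$ by $\phi(w)=w_1v_1+w_2v_2+\cdots+w_nv_n$. The distributive law in $\mathbb{C}$ shows $\phi$ is $\mathbb{C}$-linear in $w$, and by definition $\langle v\rangle^{\bot}=\ker\phi$; hence $\langle v\rangle^{\bot}$ contains $0$ and is closed under addition and scalar multiplication, so it is a vector subspace of $\mathbb{C}^n$. For the dimension, note that since $v\neq 0$ there is an index $j$ with $v_j\neq 0$, so $\phi(e_j)=v_j\neq 0$ (with $e_j$ the $j$-th standard basis vector); thus the image of $\phi$ is a non-zero subspace of the one-dimensional space $\mathbb{C}$, hence all of $\mathbb{C}$, and rank--nullity applied to $\phi$ gives $\dim\ker\phi=n-1$, i.e., $\dim\langle v\rangle^{\bot}=n-1$.

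If one instead wants an explicit witness, I would relabel coordinates so that $v_1\neq 0$ and verify that the $n-1$ vectors $u_k=e_k-(v_k/v_1)e_1$, $k=2,\dots,n$, all lie in $\langle v\rangle^{\bot}$, are linearly independent (coordinates $2,\dots,n$ already force every coefficient of a vanishing combination to be zero), and span $\langle v\rangle^{\bot}$ (given $w$ with $w_1v_1=-\sum_{k=2}^{n}w_kv_k$, one checks $w=\sum_{k=2}^{n}w_ku_k$). No step here is genuinely difficult; the only thing to watch is the bilinear-versus-Hermitian point noted above, which is precisely why the functional/rank--nullity route is the cleaner one to present.
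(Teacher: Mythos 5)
Your proof is correct, and your primary route differs from the paper's. The paper argues directly with an explicit basis: it solves $w_1 = -v_1^{-1}(w_2 v_2 + \cdots + w_n v_n)$ to write $\left\langle v\right\rangle^{\bot}$ as the span of the $n-1$ vectors $(-v_1^{-1}v_k, 0,\ldots,1,\ldots,0)$ for $k=2,\ldots,n$, and then checks their linear independence by comparing the last $n-1$ coordinates --- essentially your ``explicit witness'' variant. Your main argument instead realizes $\left\langle v\right\rangle^{\bot}$ as the kernel of the non-zero linear functional $\phi(w)=\sum_i w_i v_i$ and invokes rank--nullity; this is shorter, makes the subspace property immediate, and --- by choosing any index $j$ with $v_j\neq 0$ --- avoids a small gap in the paper's proof, which divides by $v_1$ without addressing the case $v_1=0$ (your relabeling remark fixes the same issue in the explicit-basis version). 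What the paper's approach buys in exchange is a concrete basis for $\left\langle v\right\rangle^{\bot}$, which is harmless but not needed elsewhere in the paper. Your caution about the dot product versus the Hermitian inner product is well placed and consistent with the paper's own footnoted remark on this point.
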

\begin{proof} Let $w=(w_1,w_2,...,w_n) \in \left\langle v\right\rangle ^{\bot}$ where $v=(v_1,v_2,...,v_n)$. Then, by definition, 

\begin{align}
\nonumber
& w_1v_1+w_2v_2+...+w_n v_n=0 \\
\nonumber
\Rightarrow & w_1v_1=-(w_2v_2+...+w_n v_n) \\
\nonumber
\Rightarrow & w_1=-v^{-1}_{1}(w_2v_2+...+w_n v_n) \\
\nonumber
\Rightarrow & (w_1,w_2,...,w_n) =  (-v^{-1}_{1}(w_2v_2+...+w_n v_n), w_2, ...,w_n) \\
\nonumber
\Rightarrow & (w_1,w_2,...,w_n) =  w_2 (-v^{-1}_{1} v_2, 1,0, ...,0)+\\
\nonumber
&~~~~ w_3 (-v^{-1}_{1}v_3, 0,1, ...,0)+...+ w_n (-v^{-1}_{1}v_n, 0,0, ...,1) \\
\nonumber
\Rightarrow & \left\langle v\right\rangle ^{\bot} =  span \left\{(-v^{-1}_{1} v_2, 1,0, ...,0),(-v^{-1}_{1} v_3, 0,1, ...,0),... \right. \\
\nonumber
& \left. ~~~~~~~~~~~~~~~~~~~~~~~~~~~ ..., (-v^{-1}_{1} v_n, 0,0,...,1)\right\} \text{~over~} \mathbb{C}. \\
\nonumber
\vspace{-1cm}
\end{align}Proving that the following subset of $\mathbb{C}^n$\\
{\footnotesize $ \left\{(-v^{-1}_{1} v_2, 1,0, ...,0),(-v^{-1}_{1} v_3, 0,1, ...,0), ..., (-v^{-1}_{1} v_n, 0,0,...,1)\right\}$} \\
is a linearly independent set over $\mathbb{C}$ of cardinality $(n-1)$ would be sufficient to prove that $\left\langle v\right\rangle ^{\bot}$ is a $(n-1)$-dimensional vector subspace of $\mathbb{C}^n$. Let $\alpha_1,~\alpha_2,...,~\alpha_{n-1} \in \mathbb{C}$ such that 
{ \footnotesize
\begin{align}
\nonumber
& \alpha_1 (-v^{-1}_{1} v_2, 1,0, ...,0) + \alpha_2 (-v^{-1}_{1}v_3, 0,1, ...,0)+ ...\\
\nonumber
& ~~~~~~~~~~~~~~~~~~~~~~~~~~~~...+ \alpha_{n-1} (-v^{-1}_{1}v_n, 0,0, ...,1) = (0,~0,...,~0) \\
\nonumber
\Rightarrow & (-\alpha_1 v^{-1}_{1} v_2 - \alpha_2 v^{-1}_{1}v_3 -...- \alpha_n v^{-1}_{1}v_n,~ \alpha_1,~ \alpha_2,..., ~ \alpha_{n-1}) \\
\nonumber
&~~~~~~~~~~~~~~~~~~~~~~~~~~~~~~~~~~~~~~~~~~~~~~~~~~~~~~~~~~~~~~~~~~ =(0,~0,...,~0). \\
\nonumber
\vspace{-1cm}
\end{align}}Comparing the $2^{nd},3^{rd},..., n^{th}$ components of the LHS and RHS, we get, $\alpha_1=0,\alpha_2=0, ..., \alpha_{n-1}=0$.

\end{proof}

Since $x_1, x_2, ..., x_n, x'_1, x'_2,..., x'_n \in \mathcal{S}$, where $\mathcal{S}$ is finite, there are only finitely many possibilities for the right-hand side of (\ref{nullspacen}). Thus the uncountably infinite singular fade states $(H_1, H_2, ..., H_n)$, are points in a finite number of $(n-1)$-dimensional vector subspaces of $\mathbb{C}^n$ over $\mathbb{C}$. We shall refer to these finite number of vector subspaces as the \textit{Singular Fade Subspaces} \cite{SVR}. 

We now give a detailed description of all the possibilities of singular fade subspaces for $n$-way relaying scenario when $M$-PSK is used at the end nodes. For the proof of the following Theorem, we extensively use the following Lemmas given with proofs in \cite{MuR}.

\begin{lemma} Let $\Delta \mathcal{S}$ denote the difference constellation of the signal set $\mathcal{S}$, i.e., $\Delta \mathcal{S}= \left\{s_i-s'_i | s_i,s'_i \in \mathcal{S}\right\}$. Then, for any $M$-PSK signal set, $\Delta \mathcal{S}$ is of the form, 
\begin{align}
\Delta \mathcal{S} =\left\{ 0 \right\} & \cup \left\{ 2 \sin( \frac{ \pi l}{M}) e^{j \frac{2 \pi k}{M}} | ~l ~is ~odd\right\}\\
\nonumber
&\cup \left\{ 2 \sin( \frac{ \pi l}{M}) e^{j( \frac{2 \pi k}{M} + \frac{\pi}{M})} | ~l ~is ~even\right\},
\end{align}
where $ 1 \leq l \leq M/2 $ and $ 0 \leq k \leq M-1$.
\end{lemma}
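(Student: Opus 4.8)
The plan is to base everything on the elementary phasor identity $e^{jx}-e^{jy}=2j\sin\!\big(\tfrac{x-y}{2}\big)\,e^{j(x+y)/2}$. Writing two arbitrary points of $\mathcal{S}$ as $s=e^{j2\pi a/M}$ and $s'=e^{j2\pi b/M}$ with $a,b\in\{0,1,\dots,M-1\}$ (these are the $s_i,s_i'$ of the statement, index suppressed), the identity with $x=2\pi a/M$, $y=2\pi b/M$ gives immediately
\[
s-s'=2j\,\sin\!\Big(\frac{\pi(a-b)}{M}\Big)\,e^{\,j\pi(a+b)/M}.
\]
This vanishes exactly when $a=b$, which contributes the term $\{0\}$; from here I would take $a\neq b$.

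First I would normalise the magnitude. Since $s'-s=-(s-s')$ and negation only shifts the phase by $\pi$, I may assume $a>b$, so that $d:=a-b\in\{1,\dots,M-1\}$ and $\sin(\pi d/M)>0$. Setting $l:=d$ if $d\le M/2$ and $l:=M-d$ otherwise, the relation $\sin(\pi-\theta)=\sin\theta$ gives $\sin(\pi d/M)=\sin(\pi l/M)$ with $1\le l\le M/2$, so the magnitude of $s-s'$ is the required $2\sin(\pi l/M)$. The key structural fact to record is a parity link: since $M$ is even, $l\equiv d\pmod 2$, and since $(a+b)-(a-b)=2b$ is even, $a+b\equiv d\equiv l\pmod 2$.

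Next I would settle the phase. Absorbing $2j=2e^{j\pi/2}$ into the exponent and writing $\pi/2=(M/2)(\pi/M)$, the difference takes the form $2\sin(\pi l/M)\,e^{\,j\frac{\pi}{M}N}$ with $N=a+b+M/2\in\mathbb{Z}$; for $M=2^{\lambda}$ with $\lambda\ge2$ the integer $M/2$ is even, so $N\equiv a+b\equiv l\pmod 2$. Thus the phase of every nonzero difference is an integer multiple of $\pi/M$ whose parity is that of $l$, and reducing that integer modulo $2M$ places the difference either in $\{2\sin(\pi l/M)e^{j2\pi k/M}:0\le k\le M-1\}$ or in $\{2\sin(\pi l/M)e^{j(2\pi k/M+\pi/M)}:0\le k\le M-1\}$ according to the parity of $l$; this gives one inclusion of the claimed equality. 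For the reverse inclusion I would run the computation backwards: given $l\in\{1,\dots,M/2\}$ and a phase from the parity class matching $l$, solve $a-b\equiv\pm l\pmod M$ together with the congruence on $a+b\pmod{2M}$ forced by the phase — these two congruences automatically have the same parity (this is exactly why only one of the two phase families occurs for a given parity of $l$), so they can be satisfied by integers $a,b\in\{0,\dots,M-1\}$.

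The step I expect to be the real obstacle is precisely this parity accounting: one has to check that reducing to $a>b$, folding $d\mapsto\min(d,M-d)$, and absorbing the factor $2j=2e^{j\pi/2}$ each act on the exponent consistently with the parity claim, and all three of these rely on $M$ being even. The precise pairing — which parity of $l$ goes with the $\pi/M$-shifted family — in fact depends on whether $\lambda=1$ or $\lambda\ge2$ and on the phase offset adopted for the $M$-PSK constellation, so I would fix that convention at the outset; the case $M=2$ (BPSK) is in any event immediate, since there $\Delta\mathcal{S}=\{0,\pm 2\}$. Everything else is a single line of trigonometry.
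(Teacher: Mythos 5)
The paper itself does not prove this lemma; it is quoted from \cite{MuR} and used as a black box, so there is no in-paper argument to compare against. Taken on its own, your proof is correct and essentially complete: the phasor identity $e^{jx}-e^{jy}=2j\sin\bigl(\tfrac{x-y}{2}\bigr)e^{j(x+y)/2}$, the folding $d\mapsto\min(d,M-d)$ via $\sin(\pi-\theta)=\sin\theta$, and the parity bookkeeping $a+b\equiv a-b\equiv l\pmod{2}$ account for every difference, and your congruence argument for the reverse inclusion does show that each of the $M/2$ circles carries exactly $M$ equally spaced points. One point you hedge on deserves to be made explicit, because your own computation settles it: since $N=a+b+M/2\equiv l\pmod{2}$ for $\lambda\ge 2$, under the constellation $\mathcal{S}=\{e^{j2\pi k/M}\}$ defined in Section II the \emph{odd}-$l$ circles carry the $\pi/M$-offset phases and the \emph{even}-$l$ circles carry the phases $2\pi k/M$ --- the opposite pairing to the one printed in the lemma (check $M=4$: the radius-$\sqrt{2}$ differences such as $1-j$ sit at $45^\circ+k\cdot 90^\circ$, while the radius-$2$ differences $\pm 2,\pm 2j$ sit at $k\cdot 90^\circ$). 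You are right that the printed pairing corresponds to the offset convention $e^{j(2k+1)\pi/M}$ (and to $M=2$, where it holds trivially); rather than leaving this as an unresolved convention, state which pairing holds for the paper's own $\mathcal{S}$. The discrepancy is harmless downstream, since the proof of Theorem 1 uses only the circle structure --- $M/2$ radii, $M$ equally spaced points per circle, offset determined by the parity of $l$ --- and not which parity gets which offset.
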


As a result of the above Lemma, the non-zero points in $\Delta \mathcal{S}$ lie on $M/2$ circles of radius $2 \sin( \pi l/M), 1 \leq l \leq M/2$ with each circle containing $M$ points. The phase angles of the $M$ points on each circle is $2 k \pi /M$, if $l$ is odd and $2 k \pi /M + \pi/M $ if $l$ is even, where $0 \leq k \leq M-1$.

\vspace{.2cm}
\begin{lemma}\cite{SVR} Let $i_1, i_2,...,i_L$ be the ordered indices corresponding to the non-zero components in $ \Delta x$ and $\Delta x'$ (the location of non-zero components is the same in the vectors $ \Delta x$ and $\Delta x'$). For $M$-PSK signal set, $ |\Delta x_i|=c|\Delta x'_i|, \forall 1 \leq i \leq n$, for some $c \in \mathbb{C}$, if and only if the magnitudes of the non-zero components in $\Delta x$ are equal and the magnitudes of the non-zero components in $\Delta x'$ are equal, i.e., $ |\Delta x_{i_1}|=|\Delta x_{i_2}|=...=|\Delta x_{i_L}|$ and $ |\Delta x'_{i_1}|=|\Delta x'_{i_2}|=...=|\Delta x'_{i_L}|$.
\end{lemma}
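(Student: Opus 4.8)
\emph{Overall plan.} I would prove the two implications separately. The ``if'' direction is immediate: if the nonzero entries of $\Delta x$ all have a common modulus $\rho$ and those of $\Delta x'$ a common modulus $\rho'$ (recall $\Delta x$ and $\Delta x'$ share the support $\{i_1,\dots,i_L\}$), then $c:=\rho/\rho'$ works, since on the support $|\Delta x_{i_m}|=\rho=c\rho'=c|\Delta x'_{i_m}|$ and off the support both sides vanish; the case $L=1$ is vacuous. So the work is in the converse, which I would reduce — via Lemma~2 — to a statement about products of sines of rational multiples of $\pi$, and then settle with the theory of vanishing sums of roots of unity, exploiting that $M=2^{\lambda}$ is a prime power.

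\emph{Reduction of the ``only if'' direction.} Assume $|\Delta x_i|=c|\Delta x'_i|$ for all $i$. Then $c$ is a positive real, being a ratio of moduli, and we may take $c\neq1$ (the degenerate case $c=1$, where the two modulus-vectors agree coordinatewise, is the one excluded in the applications). By Lemma~2 every nonzero modulus equals $2\sin(\pi l/M)$ for some $l\in\{1,\dots,M/2\}$, so writing $|\Delta x_{i_m}|=2\sin(\pi l_m/M)$ and $|\Delta x'_{i_m}|=2\sin(\pi l'_m/M)$, the hypothesis reads $\sin(\pi l_m/M)=c\,\sin(\pi l'_m/M)$ for $m=1,\dots,L$. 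Since $t\mapsto\sin(\pi t/M)$ is strictly increasing on $[0,M/2]$, the assignment $l_m\mapsto l'_m$ is strictly order preserving, $l_m\neq l'_m$ for every $m$ (as $c\neq1$), and the $l_m$ are all equal precisely when the $l'_m$ are; hence it suffices to contradict the existence of two support indices $m,m'$ with $l_m<l_{m'}$, and therefore $l'_m<l'_{m'}$. Eliminating $c$ between the two corresponding relations yields
\[
\sin\!\Big(\tfrac{\pi l_m}{M}\Big)\sin\!\Big(\tfrac{\pi l'_{m'}}{M}\Big)=\sin\!\Big(\tfrac{\pi l_{m'}}{M}\Big)\sin\!\Big(\tfrac{\pi l'_{m}}{M}\Big).
\]

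\emph{Passage to roots of unity and the finite check.} I would apply $2\sin A\sin B=\cos(A-B)-\cos(A+B)$ to both sides and write each $\cos(\pi t/M)=\tfrac12(\zeta^{t}+\zeta^{-t})$ with $\zeta=e^{i\pi/M}$, a primitive $2M$-th root of unity; collecting terms turns the identity into a vanishing sum of eight $2M$-th roots of unity, all with coefficient $+1$. Since $M=2^{\lambda}$, $2M=2^{\lambda+1}$ is a prime power, so by the Lam--Leung classification the only minimal vanishing sums of $2M$-th roots of unity are the binomial relations $\zeta^{j}+\zeta^{j+M}=0$; hence the eight exponents partition into four pairs, each pair differing by $M$ modulo $2M$. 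It then remains to run the short finite case analysis over which of the exponents $\pm(l_m-l'_{m'}),\ \pm(l_m+l'_{m'}),\ \pm(l_{m'}-l'_m),\ \pm(l_{m'}+l'_m)$ — the last four shifted by $M$ because they come from negated cosines — can legitimately be paired: imposing $1\le l_m,l_{m'},l'_m,l'_{m'}\le M/2$ together with $l_m<l_{m'}$, $l'_m<l'_{m'}$, and $l_m\neq l'_m$ leaves no admissible pairing, a contradiction. This forces all $l_m$, hence all $l'_m$, to coincide, i.e., the nonzero entries of $\Delta x$ (resp.\ $\Delta x'$) share a common modulus.

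\emph{Where the difficulty lies.} The delicate step is the concluding case analysis. It is exactly here that $M=2^{\lambda}$ is used in an essential way: for a general $M$ the lattice of vanishing sums of roots of unity is far richer, one must invoke the full Conway--Jones classification of trigonometric Diophantine relations, and the required finite check becomes correspondingly heavier. One also has to press the bounds $1\le l,l'\le M/2$ quite hard to eliminate the last surviving family of pairings — the one that forces $l_m=l'_m$ and $l_{m'}=l'_{m'}$ — which is precisely the configuration ruled out by $c\neq1$. If one prefers to avoid the cyclotomic input altogether, an alternative route is to retain also the phase part of Lemma~2 (the $\pi/M$ rotation between the odd-radius and even-radius points of $\Delta\mathcal{S}$) and argue more directly on the componentwise relation between $\Delta x$ and $\Delta x'$.
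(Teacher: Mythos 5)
The paper itself offers no proof of this lemma: it is imported from \cite{SVR}, with the paper stating that the proofs appear in \cite{MuR}. So there is no in-paper argument to measure you against, and your proposal has to stand on its own. Its skeleton is sound: the ``if'' direction is correct; you rightly notice that the statement is false as literally written unless the degenerate case $c=1$ (identical modulus vectors) is set aside, which is indeed the reading needed for the counting argument in Theorem~1; the reduction of the converse to a single identity $\sin(\pi l_m/M)\sin(\pi l'_{m'}/M)=\sin(\pi l_{m'}/M)\sin(\pi l'_{m}/M)$ subject to $l_m<l_{m'}$, $l'_m<l'_{m'}$, $l_m\neq l'_m$ is clean; and converting this to a vanishing sum of eight $2M$-th roots of unity with unit coefficients, then invoking the prime-power ($2M=2^{\lambda+1}$) structure to force a partition into antipodal pairs, is a legitimate route.

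The genuine gap is that the concluding finite case analysis --- which you yourself flag as the delicate step and which carries the entire mathematical content of the converse --- is asserted (``leaves no admissible pairing'') rather than performed. This cannot be waved through, for two reasons. First, it does not suffice to rule out each candidate antipodal pairing in isolation: some individual pairings are realizable within the stated bounds (for $M=8$, taking $l_m=1$, $l_{m'}=2$, $l'_m=3$, $l'_{m'}=4$ gives $(l_{m'}+l'_m)-(l_m-l'_{m'})=M$, so two of the eight roots do cancel in an antipodal pair, and so do their conjugates); the contradiction only appears when one checks that the \emph{remaining} four roots cannot then be matched, so the enumeration must range over complete partitions of all eight exponents into four antipodal pairs. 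Second, the bookkeeping is error-prone --- indeed your own description already misplaces the $M$-shift: the negated cosines are $\cos\bigl(\pi(l_m+l'_{m'})/M\bigr)$ and $\cos\bigl(\pi(l_{m'}-l'_m)/M\bigr)$, i.e.\ the middle four exponents in your list, not the last four. Until the partition-by-partition check is actually written out (with the constraints $1\le l\le M/2$, the two strict inequalities, and $l_m\neq l'_m$ all in force), the ``only if'' direction remains a plan for a proof rather than a proof.
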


From (\ref{nullspacen}) and Lemma 2, the singular fade subspaces are given by, \vspace{-0.4cm}

{\footnotesize $$ \left\langle  \left[ {\begin{array}{cc} x_{1}-x'_1 \\ x_{2}-x'_2 \\ . \\ . \\ x_{n}-x'_n \\ \end{array} } \right] \right\rangle ^{\bot} = \left\langle  \left[ {\begin{array}{cc} 2 \sin( \frac{ \pi l_1}{M}) e^{j m_1} \\ 2 \sin( \frac{ \pi l_2}{M}) e^{j m_2} \\ . \\. \\ 2 \sin( \frac{ \pi l_n}{M}) e^{j m_n} \\ \end{array} } \right] \right\rangle ^{\bot} $$ \vspace{-0.4cm}
$$ ~~~~~~~~~~~~~~~~~~~~~~~~~~~~= \left\langle  \left[ {\begin{array}{cc}  \sin( \frac{ \pi l_1}{M}) e^{j m_1} \\  \sin( \frac{ \pi l_2}{M}) e^{j m_2} \\ . \\. \\  \sin( \frac{ \pi l_n}{M}) e^{j m_n} \\ \end{array} } \right] \right\rangle ^{\bot} $$}where $m_i=2 k_i \pi /M$ if $l_i$ is odd and $2 k_i \pi /M + \pi/M $ if $l_i$ is even, where $0 \leq k_i \leq M-1$ for $i=1,2,...,n$. 

\begin{theorem} There are $\sum^{n}_{k=1} (^{n}_{k}) \left[ (\frac{M}{2})^k - (\frac{M}{2}) +1\right] M^{k-1}$ Singular Fade Subspaces for $n$-way relaying when $M$-PSK constellation is used at the end nodes.
\end{theorem}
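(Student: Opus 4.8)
The plan is to turn the statement into a count of lines through the origin in $\mathbb{C}^n$. By (\ref{nullspacen}) every singular fade subspace has the form $\langle v\rangle^{\bot}$ for some nonzero $v=(x_1-x'_1,\dots,x_n-x'_n)$ whose entries all lie in $\Delta\mathcal{S}$, and since $\sum_i w_iv_i$ is a nondegenerate bilinear form, $\langle v\rangle^{\bot}=\langle v'\rangle^{\bot}$ holds iff $v$ and $v'$ span the same $1$-dimensional subspace. So the theorem amounts to counting the distinct lines of $\mathbb{C}^n$ that are spanned by a nonzero vector all of whose coordinates lie in $\Delta\mathcal{S}$. I would stratify these lines by their support $S=\{i:v_i\neq 0\}$ (well defined, as scaling leaves the vanishing pattern unchanged): there are $\binom{n}{k}$ supports of each size $k\in\{1,\dots,n\}$, so it suffices to show that a fixed support of size $k$ carries exactly $\big[(M/2)^k-(M/2)+1\big]M^{k-1}$ lines and then sum over $k$.

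Fix $S$ with $|S|=k$. By Lemma 2 an admissible $v$ has $v_i=2\sin(\pi l_i/M)e^{jm_i}$ for $i\in S$, where $l_i\in\{1,\dots,M/2\}$ and (mod $2\pi$) $m_i$ lies in $\tfrac{2\pi}{M}\mathbb{Z}$ or in $\tfrac{2\pi}{M}\mathbb{Z}+\tfrac{\pi}{M}$ according to the parity of $l_i$; hence there are $(M^2/2)^k=(M/2)^kM^k$ admissible vectors with support $S$. Multiplication of all coordinates by $e^{j2\pi q/M}$, $q\in\mathbb{Z}_M$, permutes $\Delta\mathcal{S}$ (Lemma 2), fixes $S$ and the line, and gives $M$ distinct vectors, so every line already contains these $M$ ``rotated copies'' of any of its vectors. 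I would then split lines by whether the magnitudes $|v_i|$, $i\in S$, are all equal. If they are all equal to some $2\sin(\pi l/M)$ (a \emph{Type A} line), then any admissible $v'=cv=\rho e^{j\phi}v$ on the line has $\rho\cdot 2\sin(\pi l/M)=2\sin(\pi l'/M)$ for a single $l'\in\{1,\dots,M/2\}$, and for each such $l'$ there are exactly $M$ admissible values of $\phi$ (a coset of $\tfrac{2\pi}{M}\mathbb{Z}$); conversely all these $c$ work, so each Type A line carries exactly $(M/2)\cdot M=M^2/2$ admissible vectors with support $S$. If the $|v_i|$ are not all equal (a \emph{Type B} line), then any admissible $v'=cv=\rho e^{j\phi}v$ satisfies $|v'_i|=\rho|v_i|$ for all $i\in S$; Lemma 3 applied to the pair $v,v'$ (which share the support $S$) forces $\rho=1$, after which the requirement that each $m_i+\phi$ stay in its coset forces $\phi\in\tfrac{2\pi}{M}\mathbb{Z}$, so each Type B line carries exactly $M$ admissible vectors with support $S$.

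Counting is then mechanical. The admissible vectors with support $S$ whose coordinates all have the same magnitude number $(M/2)\cdot M^k$ (choose the common radius $l$, then $M$ phases per coordinate), and each Type A line uses $M^2/2$ of them, so there are $M^{k-1}$ Type A lines (equivalently a Type A line is pinned down by $(m_{i_2}-m_{i_1},\dots,m_{i_k}-m_{i_1})\in(\mathbb{Z}_M)^{k-1}$, independently of the common radius). The remaining $(M/2)^kM^k-(M/2)M^k=M^k\big[(M/2)^k-M/2\big]$ admissible vectors with support $S$ fall into Type B lines of $M$ vectors each, giving $M^{k-1}\big[(M/2)^k-M/2\big]$ Type B lines. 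Hence a support of size $k$ carries $M^{k-1}+M^{k-1}\big[(M/2)^k-M/2\big]=M^{k-1}\big[(M/2)^k-(M/2)+1\big]$ lines, and summing over the $\binom{n}{k}$ supports for $k=1,\dots,n$ yields $\sum_{k=1}^{n}\binom{n}{k}\big[(M/2)^k-(M/2)+1\big]M^{k-1}$, as claimed. The only step that is not bookkeeping with $M$-th roots of unity is the Type B claim, namely excluding a nontrivial real rescaling $\rho\neq 1$ that keeps every coordinate inside $\Delta\mathcal{S}$; this is exactly what Lemma 3 (and the sine arithmetic behind it from \cite{MuR}) supplies, and I expect it to be the main obstacle.
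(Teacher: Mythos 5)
Your proof is correct and follows essentially the same route as the paper's: stratify by the support of the difference vector, use Lemma 2 for the circle structure of $\Delta\mathcal{S}$ and Lemma 3 to show that only the constant-magnitude radius patterns collapse into a single subspace, yielding $\left[(M/2)^k-(M/2)+1\right]M^{k-1}$ subspaces per support of size $k$. The only difference is presentational --- you count admissible vectors and divide by orbit sizes where the paper fixes a relative phase vector and counts radius patterns directly --- and your version is, if anything, more explicit about why distinct lines give distinct orthogonal complements and about the exact orbit sizes $M^2/2$ and $M$.
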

\begin{proof} The Singular Fade Subspaces are of the form $\left\langle \left[\Delta x_1, \Delta x_2,..., \Delta x_n\right]\right\rangle^{\bot} $ where $ \Delta x_k \in \Delta \mathcal{S}, ~k=1,2,...,n$. Let $k$ be the number of non-zero $x'_{i}s$. We fix the relative phase vector of the vector $\left[\Delta x_1, \Delta x_2,..., \Delta x_n\right] =w ~(say)$. The points in $\Delta \mathcal{S} $ lie on $M/2$ circles. So there are $ (M/2)^k$ possibilities for absolute values of the non-zero components of $w$. There are $M/2$ possibilities for the case that the absolute values of all the components of $w$ that are non-zero, are equal. From Lemma 3, the Singular Fade Subspaces resulting from all of these $M/2$ cases are the same, and hence account for 1 out of the $ (M/2)^k$ cases. So for a fixed relative phase vector, there are $ \left[ (M/2)^k - M/2 +1 \right]$ Singular Fade Subspaces. From Lemma 3, fixing the absolute values of the non-zero components of $w$, each distinct relative phase vector corresponds to a distinct Singular Fade Subspace. There are $M^{k-1}$ distinct possibilities for the relative phase vector. So, there are $ \left[ (M/2)^k - M/2 +1 \right]M^{k-1}$ Singular Fade Subspaces when $w$ has $k$ non-zero components. Here, $k$ can take values from 1 to $n$. Summing over all possible values of $k$, we have $\sum^{n}_{k=1} (^{n}_{k}) \left[ (\frac{M}{2})^k - (\frac{M}{2}) +1\right] M^{n-1}$ Singular Fade Subspaces for $n$-way relaying when $M$-PSK constellation is used at the end nodes.
\end{proof}

The above theorem coincides with the results given for $n=2$ in \cite{MuR}, and the results obtained using explicit enumeration for $n=3$ in \cite{SVR} and for $n=4$ in \cite{ShR}. For illustration, we discuss the case when $n=5$. In five-way relaying, user nodes (say) A, B, C, D and E transmit $x_A$, $x_B$, $x_C$, $x_D$ and $x_E$ $\in \mathcal{S}$ respectively in the first channel use. Suppose the fade coefficient in the MA phase, $(H_A, H_B, H_C, H_D, H_E) $, is a singular fade state. Then there exist $(x_A, x_B, x_C, x_D, x_E), (x'_A, x'_B, x'_C, x'_D ,x'_E) \in \mathcal{S}^5 $, $(x_A, x_B, x_C, x_D, x_E)\neq (x'_A, x'_B, x'_C, x'_D ,x'_E) $ such that, 

{\footnotesize
\vspace{-.2 cm}
\begin{align}
\nonumber
& H_A x_A+ H_B x_B+ H_C x_C+ H_D x_D+ H_E x_E \\
\nonumber
&~~~~~~~~~~~~~~~~~~~~~~~=H_A x'_A+ H_B x'_B+ H_C x'_C+ H_D x'_D+ H_E x'_E \\
\nonumber
\Rightarrow & H_A (x_A-x'_A)+ H_B (x_B-x'_B)+ H_C (x_C-x'_C) \\
\nonumber
&~~~~~~~~~~~~~~~~~~~~~~~~~~~~~~~~~~+H_D (x_D-x'_D) + H_E (x_E-x'_E)=0 \\
\Rightarrow & (H_A, H_B, H_C, H_D, H_E) \in 
\left\langle  \left[ {\begin{array}{cc}
x_{A}-x'_A \\
x_{B}-x'_B \\
x_{C}-x'_C \\
x_{D}-x'_D \\
x_{E}-x'_E \\
\end{array} } \right] \right\rangle ^{\bot}. 
\label{nullspace} 
\vspace{-.2 cm}
\end{align}}

The adaptive network coding for five-way relaying attempts at removing the singular fade subspaces for the case given by, {\footnotesize $ \left\langle  \left[ {\begin{array}{cc}
x_{A}-x'_A \\
x_{B}-x'_B \\
x_{C}-x'_C  \\
x_{D}-x'_D  \\
x_{E}-x'_E  \\
\end{array} } \right] \right\rangle ^{\bot} 
$}. In the second channel use, relay node R transmits $x_R$ using a network coding map that depends on the values of $\hat{x_A}$, $\hat{x_B}$, $\hat{x_C}$, $\hat{x_D}$ and $\hat{x_E}$. As explained in Section III, using a network coding map represented by a 5-fold hyper latin-cube of side $M$ (when $M$-PSK is used at the end nodes A, B, C, D and E) ensures that exclusive law is satisfied. It can be shown using explicit enumeration, that when the end nodes use 4-PSK, there are $13981$ singular fade subspaces for five-way relaying, which coincides with Theorem 1, for $n=5,~ M=4$. 

\section{Removing singular fade subspaces}
\vspace{-.1cm}
We cluster the possibilities of $\left(x_{1},x_{2},...,x_n\right)$ into a clustering that can be represented by an $n$-fold Latin Hyper-Cubes of side $M$, to obtain a clustering that removes the singular fade subspaces, and also attempts to minimize the size of the constellation used by R. This clustering is represented by a constellation given by $\mathcal{S}'$, which is utilized by the relay node R in the BC phase. This is done by first constraining the $M^n$ possibilities of $\left(x_{1},x_{2},...,x_n\right)$ transmitted at the MA phase, to remove the singular fade subspaces, and then using these constraints, filling the entries of an empty $M \times M \times ... \times M ~(n~times)$ array representing the map to be used at the relay. This partially filled array is completed so as to form a $n$-fold Latin hyper-cube of side $M$. The mapping to be used at R can be obtained from the complete Latin hyper-cube keeping in mind the equivalence between the relay map that satisfies the exclusive law with the $n$-fold Latin Hyper-Cube of side $M$ as shown in Section III.

\begin{algorithm}[ht]
\SetLine
\linesnumbered
\KwIn{The constrained $M \times M\times ... \times M~ (n-times)$ array}
\KwOut{A n-fold Latin Hyper-Cube of side M representing the clustering map at the relay}

Start with the constrained $M \times M\times ... \times M~ (n-times)$ array $\mathcal{X}$

Initialize all empty cells of $\mathcal{X}$ to 0

%
The $\left(i_1,i_2,...,i_n\right)^{th}$ cell of $\mathcal{X}$ is the $i_1^{th}$ transmission of $X_1$, the $i_2^{th}$ transmission of $X_2$, ..., the $i_n^{th}$ transmission of $X_n$.

\For{$1\leq i_1 \leq M $}{

\For{$1\leq i_2 \leq M $}{...

\For{$1\leq i_n \leq M $}{

\If{cell $\left(i_1,i_2,...,i_n\right)$ of $\mathcal{X}$ is NULL}{

Initialize c=1

\eIf{$\mathcal{L}_{c}$ does not occur in the $\left(i_1,i_2,...,i_n\right)^{th}$ cell of $\mathcal{X}$}{
replace 0 at cell $\left(i_1,i_2,...,i_n\right)$ of $\mathcal{X}$ with $\mathcal{L}_{c}$\;
}{
c=c+1\; 
}
}
}
}
}
\caption{Obtaining the n-fold Latin Hyper-Cube of side M from the constrained $M \times M \times ... \times M ~(n-times) $ array}

\end{algorithm}

During the MA phase for the $n$-way relaying scenario, nodes $X_1$, $X_2$,..., $X_n$ transmit to the relay R. Let the fade state $(H_1, H_2, ..., H_n)$ denote a point in one of the $\sum^{n}_{k=1} (^{n}_{k}) \left[ (\frac{M}{2})^k - (\frac{M}{2}) +1\right] M^{n-1}$ singular fade subspaces (Section IV). The constraints on the $M^n$ array representing the map at the relay node R during BC phase for a singular fade state, can be obtained using the vectors of differences, viz., $ \left[x_1-x'_1, ~ x_2-x'_2,...,~ x_n-x'_n\right]$ contributing to this particular singular fade state. So, if $(H_1, H_2, ..., H_n) \in \left\langle \left[x_1-x'_1, ~ x_2-x'_2,...,~ x_n-x'_n\right] \right\rangle^{\bot}$, then, for $\left(x_{1}, x_{2}, ..., x_n\right), \left(x'_{1},x'_{2},..., x'_n\right) \in \mathcal{S}^n$, $ H_1 x_{1} + H_2 x_{2} + ... +H_n x_n= H_1 x'_{1} + H_2 x'_{2} + ... +H_n x'_n$. For a clustering to remove the singular fade state $(H_1, H_2, ..., H_n)$, i.e., for the minimum distance of the clustering to be greater than 0 (Section II), the pair $\left(x_{1}, x_{2}, ..., x_n\right), \left(x'_{1},x'_2, ...,x'_n\right)$ must be kept in the same cluster. Alternatively, we can say that the entry corresponding to $\left(x_{1}, x_{2}, ..., x_n\right)$ in the $M^n$ array must be the same as the entry corresponding to $\left(x'_{1},x'_{2}, ..., x'_n\right)$. Similarly, every other such pair in $\mathcal{S}^n$ contributing to this same singular fade subspace must be kept in the same cluster. Apart from all such pairs in $\mathcal{S}^n$ being kept in the same cluster of the clustering, in order to remove this particular fade state, there are no other constraints. Consider the ordering given as follows on the entries of the constrained $M^n$ array: {\footnotesize $ (c_1, c_2,...,c_n) < (c'_1, c'_2,...,c'_n)$} if $c_i<c'_i$ where $i$ is the first component among the $n$ components, where $c_i \neq c'_i$. This constrained $M^n$ array can then be completed by simply filling the first empty cell in this order, with $\mathcal{L}_i, i \geq 1$ in the increasing order of $i$ such that the completed array is an $n$-fold Latin Hyper-Cube of side $M$ (Algorithm 1). The above clustering scheme, however, cannot be utilized to remove all the singular fade subspaces, as shown in the following lemma.

\begin{table*}[ht]
\vspace{-. cm}
\tiny
\centering
\renewcommand{\arraystretch}{1.2}
\begin{tabular}{!{\vrule width 1.pt}c!{\vrule width 0.9pt}p{0.3cm}|p{0.3cm}|p{0.3cm}|p{0.3cm}!{\vrule width 1.pt}c!{\vrule width 0.9pt}p{0.3cm}|p{0.32cm}|p{0.32cm}|p{0.32cm}!{\vrule width 1.pt}c!{\vrule width 0.9pt}p{0.32cm}|p{0.32cm}|p{0.32cm}|p{0.32cm}!{\vrule width 1.pt}c!{\vrule width 0.9pt}p{0.32cm}|p{0.32cm}|p{0.32cm}|p{0.32cm}!{\vrule width 1.pt}}\noalign{\hrule height 1.pt}
 $x_C=0$   &\multirow{2}{*}{0} & \multirow{2}{*}{1} & \multirow{2}{*}{2} & \multirow{2}{*}{3}                  & $x_C=0$ & \multirow{2}{*}{0} & \multirow{2}{*}{1} & \multirow{2}{*}{2} & \multirow{2}{*}{3}                                    & $x_C=0$ & \multirow{2}{*}{0} & \multirow{2}{*}{1} & \multirow{2}{*}{2} & \multirow{2}{*}{3}            &  $x_C=0$ & \multirow{2}{*}{0} & \multirow{2}{*}{1} & \multirow{2}{*}{2} & \multirow{2}{*}{3} \\
 $x_D=0$   &   &   &   &                                                                                       & $x_D=1$ &   &   &   &                                                                                                          & $x_D=2$ &   &   &   &                                                                                  &  $x_D=3$ &   &   &   &   \\\noalign{\hrule height 0.9pt}
   0       &    $\mathcal{L}_{4}$  &    $\mathcal{L}_{19}$    & $ \mathcal{L}_{23}$   &  $\mathcal{L}_{27}$   & 0   &    $\mathcal{L}_{75}$  &    $\mathcal{L}_{79}$    & $\mathcal{L}_{83}$   &  $\mathcal{L}_{87}$               & 0   &    $\mathcal{L}_{2}$    &    $\mathcal{L}_{139}$    & $\mathcal{L}_{143}$     &  $\mathcal{L}_{147}$  &  0       &    $\mathcal{L}_{197}$             &    $\mathcal{L}_{201}$    & $ \mathcal{L}_{205}$   &  $\mathcal{L}_{209}$ \\\hline
   1       &    $\mathcal{L}_{8}$  &    $\mathcal{L}_{20}$    & $ \mathcal{L}_{24}$   &  $\mathcal{L}_{28}$   & 1   &    $\mathcal{L}_{76}$  &    $\mathcal{L}_{80}$    & $ \mathcal{L}_{84}$    &  $\mathcal{L}_{88}$              & 1   &   $\mathcal{L}_{6}$    &    $\mathcal{L}_{140}$    & $ \mathcal{L}_{144}$    &  $\mathcal{L}_{148}$  &  1       &    $\mathcal{L}_{198}$             &    $\mathcal{L}_{202}$    & $ \mathcal{L}_{206}$   &  $\mathcal{L}_{210}$ \\\hline
   2       &    $\mathcal{L}_{17}$  &    $\mathcal{L}_{21}$    & $ \mathcal{L}_{25}$   &  $\mathcal{L}_{29}$   & 2   &    $\mathcal{L}_{77}$  &    $\mathcal{L}_{81}$   & $ \mathcal{L}_{85}$   &  $\mathcal{L}_{89}$              & 2   &    $\mathcal{L}_{137}$    &    $\mathcal{L}_{141}$    & $ \mathcal{L}_{145}$    &  $\mathcal{L}_{149}$   &  2    &    $\mathcal{L}_{199}$             &    $\mathcal{L}_{203}$    & $ \mathcal{L}_{207}$   &  $\mathcal{L}_{211}$ \\\hline
   3       &    $\mathcal{L}_{18}$  &    $\mathcal{L}_{22}$    & $ \mathcal{L}_{26}$   &  $\mathcal{L}_{30}$   & 3   &    $\mathcal{L}_{78}$  &    $\mathcal{L}_{82}$   & $\mathcal{L}_{86}$    &  $\mathcal{L}_{90}$              & 3   &    $\mathcal{L}_{138}$    &    $\mathcal{L}_{142}$    & $\mathcal{L}_{146}$     &  $\mathcal{L}_{150}$   &  3    &    $\mathcal{L}_{200}$             &    $\mathcal{L}_{204}$    & $ \mathcal{L}_{208}$   &  $\mathcal{L}_{212}$ \\\noalign{\hrule height 1.pt}
 $x_C=1$   &\multirow{2}{*}{0} & \multirow{2}{*}{1} & \multirow{2}{*}{2} & \multirow{2}{*}{3}                                   & $x_C=1$ & \multirow{2}{*}{0} & \multirow{2}{*}{1} & \multirow{2}{*}{2} & \multirow{2}{*}{3}                & $x_C=1$ & \multirow{2}{*}{0} & \multirow{2}{*}{1} & \multirow{2}{*}{2} & \multirow{2}{*}{3}                   &  $x_C=1$ & \multirow{2}{*}{0} & \multirow{2}{*}{1} & \multirow{2}{*}{2} & \multirow{2}{*}{3} \\
 $x_D=0$   &   &   &   &                                                                                                        & $x_D=1$ &   &   &   &                                                                                      & $x_D=2$ &   &   &   &                                                                                         &  $x_D=3$ &   &   &   &   \\\noalign{\hrule height 0.9pt}
   0   &    $\mathcal{L}_{31}$    &    $\mathcal{L}_{16}$    & $ \mathcal{L}_{37}$   &  $\mathcal{L}_{41}$    & 0   &    $\mathcal{L}_{91}$  &    $\mathcal{L}_{14}$    & $ \mathcal{L}_{97}$   &  $\mathcal{L}_{101}$    &0   &    $\mathcal{L}_{151}$       &    $\mathcal{L}_{155}$    & $ \mathcal{L}_{159}$   &  $\mathcal{L}_{163}$           & 0   &    $\mathcal{L}_{213}$            &    $\mathcal{L}_{217}$              & $ \mathcal{L}_{221}$   &  $\mathcal{L}_{225}$ \\\hline
   1   &    $\mathcal{L}_{32}$    &    $\mathcal{L}_{35}$    & $ \mathcal{L}_{38}$   &  $\mathcal{L}_{42}$    & 1   &    $\mathcal{L}_{92}$  &    $\mathcal{L}_{95}$    & $ \mathcal{L}_{98}$   &  $\mathcal{L}_{102}$    &1   &    $\mathcal{L}_{152}$       &    $\mathcal{L}_{156}$    & $ \mathcal{L}_{160}$   &  $\mathcal{L}_{164}$           & 1   &    $\mathcal{L}_{214}$            &    $\mathcal{L}_{218}$              & $ \mathcal{L}_{222}$   &  $\mathcal{L}_{226}$ \\\hline
   2   &    $\mathcal{L}_{33}$    &    $\mathcal{L}_{12}$    & $ \mathcal{L}_{39}$   &  $\mathcal{L}_{43}$    & 2   &    $\mathcal{L}_{93}$  &    $\mathcal{L}_{10}$    & $ \mathcal{L}_{99}$   &  $\mathcal{L}_{103}$    &2   &    $\mathcal{L}_{153}$       &    $\mathcal{L}_{157}$    & $ \mathcal{L}_{161}$   &  $\mathcal{L}_{165}$           & 2   &    $\mathcal{L}_{215}$            &    $\mathcal{L}_{219}$              & $ \mathcal{L}_{223}$   &  $\mathcal{L}_{227}$ \\\hline
   3   &    $\mathcal{L}_{34}$    &    $\mathcal{L}_{36}$    & $ \mathcal{L}_{40}$   &  $\mathcal{L}_{44}$    & 3   &    $\mathcal{L}_{94}$  &    $\mathcal{L}_{96}$    & $ \mathcal{L}_{100}$  &  $\mathcal{L}_{104}$    &3   &    $\mathcal{L}_{154}$       &    $\mathcal{L}_{158}$    & $ \mathcal{L}_{162}$   &  $\mathcal{L}_{166}$           & 3   &    $\mathcal{L}_{216}$              &    $\mathcal{L}_{220}$            & $ \mathcal{L}_{224}$   &  $\mathcal{L}_{228}$ \\\noalign{\hrule height 1.pt}
 $x_C=2$   &\multirow{2}{*}{0} & \multirow{2}{*}{1} & \multirow{2}{*}{2} & \multirow{2}{*}{3}                                            & $x_C=2$ & \multirow{2}{*}{0} & \multirow{2}{*}{1} & \multirow{2}{*}{2} & \multirow{2}{*}{3}            & $x_C=2$ & \multirow{2}{*}{0} & \multirow{2}{*}{1} & \multirow{2}{*}{2} & \multirow{2}{*}{3}                                    &  $x_C=2$ & \multirow{2}{*}{0} & \multirow{2}{*}{1} & \multirow{2}{*}{2} & \multirow{2}{*}{3} \\
 $x_D=0$   &   &   &   &                                                                                                                 & $x_D=1$ &   &   &   &                                                                                  & $x_D=2$ &   &   &   &                                                                                                          &  $x_D=3$ &   &   &   &   \\\noalign{\hrule height 0.9pt}
 0   &    $\mathcal{L}_{45}$  &    $\mathcal{L}_{49}$    & $ \mathcal{L}_{53}$  &  $\mathcal{L}_{57}$         &0   &    $\mathcal{L}_{105}$  &    $\mathcal{L}_{109}$    & $ \mathcal{L}_{113}$  &  $\mathcal{L}_{117}$  & 0   &    $\mathcal{L}_{167}$  &    $\mathcal{L}_{171}$    & $ \mathcal{L}_{175}$   &  $\mathcal{L}_{177}$     & 0   &    $\mathcal{L}_{229}$              &    $\mathcal{L}_{233}$               & $ \mathcal{L}_{237}$   &  $\mathcal{L}_{239}$ \\\hline
 1   &    $\mathcal{L}_{46}$  &    $\mathcal{L}_{50}$    & $ \mathcal{L}_{54}$  &  $\mathcal{L}_{58}$         &1   &    $\mathcal{L}_{106}$  &    $\mathcal{L}_{110}$    & $ \mathcal{L}_{114}$  &  $\mathcal{L}_{118}$  & 1   &    $\mathcal{L}_{168}$  &    $\mathcal{L}_{172}$    & $ \mathcal{L}_{11}$   &  $\mathcal{L}_{178}$     & 1   &    $\mathcal{L}_{230}$              &    $\mathcal{L}_{234}$                & $ \mathcal{L}_{9}$   &  $\mathcal{L}_{240}$ \\\hline
 2   &    $\mathcal{L}_{47}$  &    $\mathcal{L}_{51}$    & $\mathcal{L}_{55}$   &  $\mathcal{L}_{59}$         &2   &    $\mathcal{L}_{107}$  &    $\mathcal{L}_{111}$    & $\mathcal{L}_{115}$   &  $\mathcal{L}_{119}$  & 2   &    $\mathcal{L}_{169}$  &    $\mathcal{L}_{172}$    & $ \mathcal{L}_{15}$   &  $\mathcal{L}_{179}$     & 2   &    $\mathcal{L}_{231}$              &    $\mathcal{L}_{235}$                 & $ \mathcal{L}_{13}$   &  $\mathcal{L}_{241}$ \\\hline
 3   &    $\mathcal{L}_{48}$  &    $\mathcal{L}_{52}$    & $\mathcal{L}_{56}$   &  $\mathcal{L}_{60}$         &3   &    $\mathcal{L}_{108}$  &    $\mathcal{L}_{112}$    & $\mathcal{L}_{116}$   &  $\mathcal{L}_{120}$  & 3   &    $\mathcal{L}_{170}$  &    $\mathcal{L}_{174}$    & $ \mathcal{L}_{176}$   &  $\mathcal{L}_{180}$     & 3   &    $\mathcal{L}_{232}$              &    $\mathcal{L}_{236}$                & $ \mathcal{L}_{238}$   &  $\mathcal{L}_{242}$ \\\noalign{\hrule height 1.pt}

 $x_C=3$   &\multirow{2}{*}{0} & \multirow{2}{*}{1} & \multirow{2}{*}{2} & \multirow{2}{*}{3}          & $x_C=3$ & \multirow{2}{*}{0} & \multirow{2}{*}{1} & \multirow{2}{*}{2} & \multirow{2}{*}{3}                                                & $x_C=3$ & \multirow{2}{*}{0} & \multirow{2}{*}{1} & \multirow{2}{*}{2} & \multirow{2}{*}{3}                                               &  $x_C=3$ & \multirow{2}{*}{0} & \multirow{2}{*}{1} & \multirow{2}{*}{2} & \multirow{2}{*}{3} \\
 $x_D=0$   &   &   &   &                                                                               & $x_D=1$ &   &   &   &                                                                                                                      & $x_D=2$ &   &   &   &                                                                                                                     &  $x_D=3$ &   &   &   &   \\\noalign{\hrule height 0.9pt}
 0   &    $\mathcal{L}_{61}$  &    $\mathcal{L}_{65}$    & $ \mathcal{L}_{69}$  &  $\mathcal{L}_{73}$ & 0   &   $\mathcal{L}_{121}$  &    $\mathcal{L}_{125}$   & $\mathcal{L}_{129}$    &  $\mathcal{L}_{133}$     & 0   &    $\mathcal{L}_{181}$  &    $\mathcal{L}_{185}$    & $\mathcal{L}_{189}$      &  $\mathcal{L}_{193}$          &0   &    $\mathcal{L}_{243}$  &    $\mathcal{L}_{247}$    & $ \mathcal{L}_{251}$   &  $\mathcal{L}_{255}$ \\\hline 
 1   &    $\mathcal{L}_{62}$  &    $\mathcal{L}_{66}$    & $ \mathcal{L}_{70}$  &  $\mathcal{L}_{74}$ & 1   &   $\mathcal{L}_{122}$  &    $\mathcal{L}_{126}$   & $ \mathcal{L}_{130}$   &  $\mathcal{L}_{134}$     & 1   &    $\mathcal{L}_{182}$  &    $\mathcal{L}_{186}$    & $ \mathcal{L}_{190}$     &  $\mathcal{L}_{194}$          &1   &    $\mathcal{L}_{244}$  &    $\mathcal{L}_{248}$    & $ \mathcal{L}_{252}$   &  $\mathcal{L}_{256}$ \\\hline 
 2   &    $\mathcal{L}_{63}$  &    $\mathcal{L}_{67}$    & $\mathcal{L}_{71}$   &  $\pmb{\mathcal{L}}_{\pmb{1}}$ & 2   &    $\mathcal{L}_{123}$  &    $\mathcal{L}_{127}$    & $\mathcal{L}_{131}$    &  $\mathcal{L}_{135}$    & 2   &    $\mathcal{L}_{183}$  &    $\mathcal{L}_{187}$    & $ \mathcal{L}_{191}$     &  $\mathcal{L}_{195}$          &2   &    $\mathcal{L}_{245}$  &    $\mathcal{L}_{249}$    & $ \mathcal{L}_{253}$   &  $\pmb{\mathcal{L}}_{\pmb{3}}$ \\\hline 
 3   &    $\mathcal{L}_{64}$  &    $\mathcal{L}_{68}$    & $\mathcal{L}_{72}$   &  $\pmb{\mathcal{L}}_{\pmb{5}}$ & 3   &    $\mathcal{L}_{124}$  &    $\mathcal{L}_{128}$    & $\mathcal{L}_{132}$    &  $\mathcal{L}_{136}$    & 3   &    $\mathcal{L}_{184}$  &    $\mathcal{L}_{188}$    & $\mathcal{L}_{192}$      &  $\mathcal{L}_{196}$          &3   &    $\mathcal{L}_{246}$  &    $\mathcal{L}_{250}$    & $ \mathcal{L}_{254}$   &  $\pmb{\mathcal{L}}_{\pmb{7}}$ \\\noalign{\hrule height 1.pt}
\end{tabular}
\caption{Example 1: Latin-Hyper Cube representing the Relay Map where $x_E=0$, entries $x_C$'s and $x_D$'s are as mentioned, $x_A$'s entries are along the rows and $x_B$'s entries are along the columns of each $4\times 4$ matrix for fixed values of $x_C$ and $x_D$.}
\vspace{-.4cm}
\end{table*}

\begin{table*}[ht]
\vspace{-. cm}
\tiny
\centering
\renewcommand{\arraystretch}{1.2}
\begin{tabular}{!{\vrule width 1.pt}c!{\vrule width 0.9pt}p{0.32cm}|p{0.32cm}|p{0.32cm}|p{0.32cm}!{\vrule width 1.pt}c!{\vrule width 0.9pt}p{0.3cm}|p{0.32cm}|p{0.32cm}|p{0.32cm}!{\vrule width 1.pt}c!{\vrule width 0.9pt}p{0.32cm}|p{0.32cm}|p{0.32cm}|p{0.32cm}!{\vrule width 1.pt}c!{\vrule width 0.9pt}p{0.32cm}|p{0.32cm}|p{0.32cm}|p{0.32cm}!{\vrule width 1.pt}}\noalign{\hrule height 1.pt}
 $x_C=0$   &\multirow{2}{*}{0} & \multirow{2}{*}{1} & \multirow{2}{*}{2} & \multirow{2}{*}{3}                  & $x_C=0$ & \multirow{2}{*}{0} & \multirow{2}{*}{1} & \multirow{2}{*}{2} & \multirow{2}{*}{3}                                    & $x_C=0$ & \multirow{2}{*}{0} & \multirow{2}{*}{1} & \multirow{2}{*}{2} & \multirow{2}{*}{3}            &  $x_C=0$ & \multirow{2}{*}{0} & \multirow{2}{*}{1} & \multirow{2}{*}{2} & \multirow{2}{*}{3} \\
 $x_D=0$   &   &   &   &                                                                                       & $x_D=1$ &   &   &   &                                                                                                          & $x_D=2$ &   &   &   &                                                                                  &  $x_D=3$ &   &   &   &   \\\noalign{\hrule height 0.9pt}
   0       &    $\mathcal{L}_{9}$  &    $\mathcal{L}_{92}$    & $ \mathcal{L}_{3}$   &  $\mathcal{L}_{98}$   & 0   &    $\mathcal{L}_{11}$  &    $\mathcal{L}_{32}$    & $\mathcal{L}_{42}$   &  $\mathcal{L}_{38}$               & 0   &    $\mathcal{L}_{1}$    &    $\mathcal{L}_{135}$    & $\mathcal{L}_{215}$     &  $\mathcal{L}_{219}$  &  0       &    $\mathcal{L}_{156}$             &    $\mathcal{L}_{152}$    & $ \mathcal{L}_{164}$   &  $\mathcal{L}_{160}$ \\\hline
   1       &    $\mathcal{L}_{7}$  &    $\mathcal{L}_{91}$    & $ \mathcal{L}_{96}$   &  $\mathcal{L}_{97}$   & 1   &    $\mathcal{L}_{36}$  &    $\mathcal{L}_{31}$    & $ \mathcal{L}_{41}$    &  $\mathcal{L}_{37}$            & 1   &    $\mathcal{L}_{5}$    &    $\mathcal{L}_{136}$    & $ \mathcal{L}_{216}$    &  $\mathcal{L}_{220}$  &  1       &    $\mathcal{L}_{155}$             &    $\mathcal{L}_{151}$    & $ \mathcal{L}_{163}$   &  $\mathcal{L}_{159}$ \\\hline
   2       &    $\mathcal{L}_{95}$  &    $\mathcal{L}_{94}$    & $ \mathcal{L}_{101}$   &  $\mathcal{L}_{100}$   & 2   &    $\mathcal{L}_{35}$  &    $\mathcal{L}_{34}$   & $ \mathcal{L}_{44}$   &  $\mathcal{L}_{40}$           & 2   &    $\mathcal{L}_{116}$    &    $\mathcal{L}_{213}$    & $ \mathcal{L}_{217}$    &  $\mathcal{L}_{221}$   &  2    &    $\mathcal{L}_{158}$             &    $\mathcal{L}_{154}$    & $ \mathcal{L}_{166}$   &  $\mathcal{L}_{162}$ \\\hline
   3       &    $\mathcal{L}_{13}$  &    $\mathcal{L}_{93}$    & $ \mathcal{L}_{102}$   &  $\mathcal{L}_{99}$   & 3   &    $\mathcal{L}_{15}$  &    $\mathcal{L}_{33}$   & $\mathcal{L}_{43}$    &  $\mathcal{L}_{39}$            & 3   &    $\mathcal{L}_{115}$    &    $\mathcal{L}_{214}$    & $\mathcal{L}_{218}$     &  $\mathcal{L}_{222}$   &  3    &    $\mathcal{L}_{157}$             &    $\mathcal{L}_{153}$    & $ \mathcal{L}_{165}$   &  $\mathcal{L}_{161}$ \\\noalign{\hrule height 1.pt}
 $x_C=1$   &\multirow{2}{*}{0} & \multirow{2}{*}{1} & \multirow{2}{*}{2} & \multirow{2}{*}{3}                                   & $x_C=1$ & \multirow{2}{*}{0} & \multirow{2}{*}{1} & \multirow{2}{*}{2} & \multirow{2}{*}{3}                & $x_C=1$ & \multirow{2}{*}{0} & \multirow{2}{*}{1} & \multirow{2}{*}{2} & \multirow{2}{*}{3}                   &  $x_C=1$ & \multirow{2}{*}{0} & \multirow{2}{*}{1} & \multirow{2}{*}{2} & \multirow{2}{*}{3} \\
 $x_D=0$   &   &   &   &                                                                                                        & $x_D=1$ &   &   &   &                                                                                      & $x_D=2$ &   &   &   &                                                                                         &  $x_D=3$ &   &   &   &   \\\noalign{\hrule height 0.9pt}
   0   &    $\mathcal{L}_{80}$    &    $\mathcal{L}_{76}$    & $ \mathcal{L}_{88}$   &  $\mathcal{L}_{84}$    & 0   &    $\mathcal{L}_{20}$  &    $\mathcal{L}_{17}$    & $ \mathcal{L}_{28}$   &  $\mathcal{L}_{25}$    &0   &    $\mathcal{L}_{202}$       &    $\mathcal{L}_{198}$    & $ \mathcal{L}_{210}$   &  $\mathcal{L}_{206}$           & 0   &    $\mathcal{L}_{141}$            &    $\mathcal{L}_{145}$              & $ \mathcal{L}_{149}$   &  $\mathcal{L}_{169}$ \\\hline
   1   &    $\mathcal{L}_{79}$    &    $\mathcal{L}_{75}$    & $ \mathcal{L}_{87}$   &  $\mathcal{L}_{83}$    & 1   &    $\mathcal{L}_{19}$  &    $\mathcal{L}_{18}$    & $ \mathcal{L}_{27}$   &  $\mathcal{L}_{26}$    &1   &    $\mathcal{L}_{201}$       &    $\mathcal{L}_{197}$    & $ \mathcal{L}_{209}$   &  $\mathcal{L}_{205}$           & 1   &    $\mathcal{L}_{142}$            &    $\mathcal{L}_{146}$              & $ \mathcal{L}_{150}$   &  $\mathcal{L}_{170}$ \\\hline
   2   &    $\mathcal{L}_{82}$    &    $\mathcal{L}_{78}$    & $ \mathcal{L}_{90}$   &  $\mathcal{L}_{86}$    & 2   &    $\mathcal{L}_{22}$  &    $\mathcal{L}_{23}$    & $ \mathcal{L}_{30}$   &  $\mathcal{L}_{45}$    &2   &    $\mathcal{L}_{204}$       &    $\mathcal{L}_{200}$    & $ \mathcal{L}_{212}$   &  $\mathcal{L}_{208}$           & 2   &    $\mathcal{L}_{143}$            &    $\mathcal{L}_{147}$              & $ \mathcal{L}_{167}$   &  $\mathcal{L}_{171}$ \\\hline
   3   &    $\mathcal{L}_{81}$    &    $\mathcal{L}_{77}$    & $ \mathcal{L}_{89}$   &  $\mathcal{L}_{85}$    & 3   &    $\mathcal{L}_{21}$  &    $\mathcal{L}_{24}$    & $ \mathcal{L}_{29}$  &  $\mathcal{L}_{46}$    &3   &    $\mathcal{L}_{203}$       &    $\mathcal{L}_{199}$    & $ \mathcal{L}_{211}$   &  $\mathcal{L}_{207}$           & 3   &    $\mathcal{L}_{144}$              &    $\mathcal{L}_{148}$            & $ \mathcal{L}_{168}$   &  $\mathcal{L}_{172}$ \\\noalign{\hrule height 1.pt}
 $x_C=2$   &\multirow{2}{*}{0} & \multirow{2}{*}{1} & \multirow{2}{*}{2} & \multirow{2}{*}{3}                                            & $x_C=2$ & \multirow{2}{*}{0} & \multirow{2}{*}{1} & \multirow{2}{*}{2} & \multirow{2}{*}{3}            & $x_C=2$ & \multirow{2}{*}{0} & \multirow{2}{*}{1} & \multirow{2}{*}{2} & \multirow{2}{*}{3}                                    &  $x_C=2$ & \multirow{2}{*}{0} & \multirow{2}{*}{1} & \multirow{2}{*}{2} & \multirow{2}{*}{3} \\
 $x_D=0$   &   &   &   &                                                                                                                 & $x_D=1$ &   &   &   &                                                                                  & $x_D=2$ &   &   &   &                                                                                                          &  $x_D=3$ &   &   &   &   \\\noalign{\hrule height 0.9pt}
 0   &    $\mathcal{L}_{103}$  &    $\mathcal{L}_{122}$    & $ \mathcal{L}_{127}$  &  $\mathcal{L}_{130}$         &0   &    $\mathcal{L}_{66}$  &    $\mathcal{L}_{62}$    & $ \mathcal{L}_{74}$  &  $\mathcal{L}_{70}$  & 0   &    $\mathcal{L}_{223}$  &    $\mathcal{L}_{227}$    & $ \mathcal{L}_{245}$   &  $\mathcal{L}_{249}$       & 0   &    $\mathcal{L}_{186}$              &    $\mathcal{L}_{182}$               & $ \mathcal{L}_{194}$   &  $\mathcal{L}_{190}$ \\\hline
 1   &    $\mathcal{L}_{104}$  &    $\mathcal{L}_{121}$    & $ \mathcal{L}_{128}$  &  $\mathcal{L}_{129}$         &1   &    $\mathcal{L}_{65}$  &    $\mathcal{L}_{61}$    & $ \mathcal{L}_{73}$  &  $\mathcal{L}_{69}$  & 1   &    $\mathcal{L}_{224}$  &    $\mathcal{L}_{228}$    & $ \mathcal{L}_{246}$   &  $\mathcal{L}_{250}$       & 1   &    $\mathcal{L}_{185}$              &    $\mathcal{L}_{181}$                & $ \mathcal{L}_{193}$   &  $\mathcal{L}_{189}$ \\\hline
 2   &    $\mathcal{L}_{125}$  &    $\mathcal{L}_{124}$    & $\mathcal{L}_{133}$   &  $\mathcal{L}_{132}$         &2   &    $\mathcal{L}_{68}$  &    $\mathcal{L}_{64}$    & $\mathcal{L}_{138}$   &  $\mathcal{L}_{72}$  & 2   &    $\mathcal{L}_{225}$  &    $\mathcal{L}_{243}$    & $ \mathcal{L}_{247}$   &  $\mathcal{L}_{251}$      & 2   &    $\mathcal{L}_{188}$              &    $\mathcal{L}_{184}$                 & $ \mathcal{L}_{196}$   &  $\mathcal{L}_{192}$ \\\hline
 3   &    $\mathcal{L}_{126}$  &    $\mathcal{L}_{123}$    & $\mathcal{L}_{134}$   &  $\mathcal{L}_{131}$         &3   &    $\mathcal{L}_{67}$  &    $\mathcal{L}_{63}$    & $\mathcal{L}_{137}$   &  $\mathcal{L}_{71}$  & 3   &    $\mathcal{L}_{226}$  &    $\mathcal{L}_{244}$    & $ \mathcal{L}_{248}$   &  $\mathcal{L}_{252}$      & 3   &    $\mathcal{L}_{187}$              &    $\mathcal{L}_{183}$                & $ \mathcal{L}_{195}$   &  $\mathcal{L}_{191}$ \\\noalign{\hrule height 1.pt}

 $x_C=3$   &\multirow{2}{*}{0} & \multirow{2}{*}{1} & \multirow{2}{*}{2} & \multirow{2}{*}{3}          & $x_C=3$ & \multirow{2}{*}{0} & \multirow{2}{*}{1} & \multirow{2}{*}{2} & \multirow{2}{*}{3}                                                & $x_C=3$ & \multirow{2}{*}{0} & \multirow{2}{*}{1} & \multirow{2}{*}{2} & \multirow{2}{*}{3}                                               &  $x_C=3$ & \multirow{2}{*}{0} & \multirow{2}{*}{1} & \multirow{2}{*}{2} & \multirow{2}{*}{3} \\
 $x_D=0$   &   &   &   &                                                                               & $x_D=1$ &   &   &   &                                                                                                                      & $x_D=2$ &   &   &   &                                                                                                                     &  $x_D=3$ &   &   &   &   \\\noalign{\hrule height 0.9pt}
 0   &    $\mathcal{L}_{110}$  &    $\mathcal{L}_{106}$    & $ \mathcal{L}_{118}$  &  $\mathcal{L}_{114}$ & 0   &   $\mathcal{L}_{50}$  &    $\mathcal{L}_{47}$   & $\mathcal{L}_{58}$    &  $\mathcal{L}_{55}$     & 0   &    $\mathcal{L}_{234}$  &    $\mathcal{L}_{230}$    & $\mathcal{L}_{240}$      &  $\mathcal{L}_{238}$          &0   &    $\mathcal{L}_{173}$  &    $\mathcal{L}_{176}$    & $ \mathcal{L}_{180}$   &  $\mathcal{L}_{261}$ \\\hline 
 1   &    $\mathcal{L}_{109}$  &    $\mathcal{L}_{105}$    & $ \mathcal{L}_{117}$  &  $\mathcal{L}_{113}$ & 1   &   $\mathcal{L}_{49}$  &    $\mathcal{L}_{48}$   & $ \mathcal{L}_{57}$   &  $\mathcal{L}_{56}$     & 1   &    $\mathcal{L}_{233}$  &    $\mathcal{L}_{229}$    & $ \mathcal{L}_{239}$     &  $\mathcal{L}_{14}$          &1   &    $\mathcal{L}_{174}$  &    $\mathcal{L}_{179}$    & $ \mathcal{L}_{258}$   &  $\mathcal{L}_{16}$ \\\hline 
 2   &    $\mathcal{L}_{112}$  &    $\mathcal{L}_{108}$    & $\mathcal{L}_{120}$   &  $\pmb{\mathcal{L}}_{\pmb{2}}$ & 2   &    $\mathcal{L}_{52}$  &    $\mathcal{L}_{53}$    & $\mathcal{L}_{60}$    &  $\mathcal{L}_{139}$    & 2   &    $\mathcal{L}_{236}$  &    $\mathcal{L}_{232}$    & $ \mathcal{L}_{242}$     &  $\mathcal{L}_{237}$          &2   &    $\mathcal{L}_{175}$  &    $\mathcal{L}_{178}$    & $ \mathcal{L}_{259}$   &  $\pmb{\mathcal{L}}_{\pmb{4}}$ \\\hline 
 3   &    $\mathcal{L}_{111}$  &    $\mathcal{L}_{107}$    & $\mathcal{L}_{119}$   & $\pmb{\mathcal{L}}_{\pmb{6}}$ & 3   &    $\mathcal{L}_{51}$  &    $\mathcal{L}_{54}$    & $\mathcal{L}_{59}$    &  $\mathcal{L}_{140}$    & 3   &    $\mathcal{L}_{235}$  &    $\mathcal{L}_{231}$    & $\mathcal{L}_{241}$      &  $\mathcal{L}_{10}$          &3   &    $\mathcal{L}_{177}$  &    $\mathcal{L}_{257}$    & $ \mathcal{L}_{260}$   &  $\pmb{\mathcal{L}}_{\pmb{8}}$ \\\noalign{\hrule height 1.pt}
\end{tabular}
\caption{Example 1: Latin-Hyper Cube representing the Relay Map where $x_E=1$, entries $x_A$'s and $x_B$'s are as mentioned, $x_C$'s entries are along the rows and $x_D$'s entries are along the columns of each $4\times 4$ matrix for fixed values of $x_A$ and $x_B$.}
\vspace{-.9cm}
\end{table*}
\begin{table*}[ht]
\vspace{-. cm}
\tiny
\centering
\renewcommand{\arraystretch}{1.2}
\begin{tabular}{!{\vrule width 1.pt}c!{\vrule width 0.9pt}p{0.32cm}|p{0.32cm}|p{0.32cm}|p{0.3cm}!{\vrule width 1.pt}c!{\vrule width 0.9pt}p{0.3cm}|p{0.32cm}|p{0.32cm}|p{0.32cm}!{\vrule width 1.pt}c!{\vrule width 0.9pt}p{0.3cm}|p{0.3cm}|p{0.3cm}|p{0.32cm}!{\vrule width 1.pt}c!{\vrule width 0.9pt}p{0.32cm}|p{0.32cm}|p{0.32cm}|p{0.32cm}!{\vrule width 1.pt}}\noalign{\hrule height 1.pt}
 $x_C=0$   &\multirow{2}{*}{0} & \multirow{2}{*}{1} & \multirow{2}{*}{2} & \multirow{2}{*}{3}                  & $x_C=0$ & \multirow{2}{*}{0} & \multirow{2}{*}{1} & \multirow{2}{*}{2} & \multirow{2}{*}{3}                                    & $x_C=0$ & \multirow{2}{*}{0} & \multirow{2}{*}{1} & \multirow{2}{*}{2} & \multirow{2}{*}{3}            &  $x_C=0$ & \multirow{2}{*}{0} & \multirow{2}{*}{1} & \multirow{2}{*}{2} & \multirow{2}{*}{3} \\
 $x_D=0$   &   &   &   &                                                                                       & $x_D=1$ &   &   &   &                                                                                                          & $x_D=2$ &   &   &   &                                                                                  &  $x_D=3$ &   &   &   &   \\\noalign{\hrule height 0.9pt}
   0       &    $\mathcal{L}_{10}$  &    $\mathcal{L}_{168}$    & $ \mathcal{L}_{174}$   &  $\mathcal{L}_{183}$   & 0   &    $\mathcal{L}_{12}$  &    $\mathcal{L}_{191}$    & $\mathcal{L}_{187}$   &  $\mathcal{L}_{224}$               & 0   &    $\mathcal{L}_{54}$    &    $\mathcal{L}_{46}$    & $\mathcal{L}_{48}$     &  $\mathcal{L}_{51}$  &  0    &    $\mathcal{L}_{71}$           &    $\mathcal{L}_{104}$    & $ \mathcal{L}_{107}$   &  $\mathcal{L}_{111}$ \\\hline
   1       &    $\mathcal{L}_{171}$  &    $\mathcal{L}_{167}$    & $ \mathcal{L}_{173}$   &  $\mathcal{L}_{175}$   & 1   &    $\mathcal{L}_{180}$  &    $\mathcal{L}_{192}$    & $ \mathcal{L}_{188}$    &  $\mathcal{L}_{223}$              & 1   &   $\mathcal{L}_{53}$    &    $\mathcal{L}_{45}$    & $ \mathcal{L}_{47}$    &  $\mathcal{L}_{52}$  &  1  &    $\mathcal{L}_{72}$           &    $\mathcal{L}_{103}$    & $ \mathcal{L}_{108}$   &  $\mathcal{L}_{112}$ \\\hline
   2       &    $\mathcal{L}_{172}$  &    $\mathcal{L}_{170}$    & $ \mathcal{L}_{177}$   &  $\mathcal{L}_{176}$   & 2   &    $\mathcal{L}_{190}$  &    $\mathcal{L}_{193}$   & $ \mathcal{L}_{226}$   &  $\mathcal{L}_{230}$              & 2   &    $\mathcal{L}_{56}$    &    $\mathcal{L}_{57}$    & $ \mathcal{L}_{49}$    &  $\mathcal{L}_{61}$   &  2  &    $\mathcal{L}_{113}$          &    $\mathcal{L}_{117}$    & $ \mathcal{L}_{109}$   &  $\mathcal{L}_{121}$ \\\hline
   3       &    $\mathcal{L}_{14}$  &    $\mathcal{L}_{169}$    & $ \mathcal{L}_{178}$   &  $\mathcal{L}_{181}$   & 3   &    $\mathcal{L}_{16}$  &    $\mathcal{L}_{194}$   & $\mathcal{L}_{225}$    &  $\mathcal{L}_{233}$              & 3   &    $\mathcal{L}_{55}$    &    $\mathcal{L}_{58}$    & $\mathcal{L}_{50}$     &  $\mathcal{L}_{62}$   &  3    &    $\mathcal{L}_{114}$          &    $\mathcal{L}_{118}$    & $ \mathcal{L}_{110}$   &  $\mathcal{L}_{122}$ \\\noalign{\hrule height 1.pt}
 $x_C=1$   &\multirow{2}{*}{0} & \multirow{2}{*}{1} & \multirow{2}{*}{2} & \multirow{2}{*}{3}                                   & $x_C=1$ & \multirow{2}{*}{0} & \multirow{2}{*}{1} & \multirow{2}{*}{2} & \multirow{2}{*}{3}                & $x_C=1$ & \multirow{2}{*}{0} & \multirow{2}{*}{1} & \multirow{2}{*}{2} & \multirow{2}{*}{3}                   &  $x_C=1$ & \multirow{2}{*}{0} & \multirow{2}{*}{1} & \multirow{2}{*}{2} & \multirow{2}{*}{3} \\
 $x_D=0$   &   &   &   &                                                                                                        & $x_D=1$ &   &   &   &                                                                                      & $x_D=2$ &   &   &   &                                                                                         &  $x_D=3$ &   &   &   &   \\\noalign{\hrule height 0.9pt}
   0   &    $\mathcal{L}_{136}$    &    $\mathcal{L}_{15}$    & $ \mathcal{L}_{175}$   &  $\mathcal{L}_{116}$    & 0   &    $\mathcal{L}_{195}$  &    $\pmb{\mathcal{L}}_{\pmb{6}}$    & $ \mathcal{L}_{231}$   &  $\mathcal{L}_{13}$    &0   &    $\mathcal{L}_{59}$       &    $\pmb{\mathcal{L}}_{\pmb{8}}$    & $ \mathcal{L}_{63}$   &  $\mathcal{L}_{67}$           & 0   &    $\mathcal{L}_{119}$            &    $\mathcal{L}_{131}$              & $ \mathcal{L}_{123}$   &  $\mathcal{L}_{269}$ \\\hline
   1   &    $\mathcal{L}_{135}$    &    $\mathcal{L}_{115}$    & $ \mathcal{L}_{184}$   &  $\mathcal{L}_{137}$    & 1   &    $\mathcal{L}_{196}$  &   $\pmb{\mathcal{L}}_{\pmb{2}}$    & $ \mathcal{L}_{232}$   &  $\mathcal{L}_{235}$    &1   &    $\mathcal{L}_{60}$       &   $\pmb{\mathcal{L}}_{\pmb{4}}$    & $ \mathcal{L}_{64}$   &  $\mathcal{L}_{68}$           & 1   &    $\mathcal{L}_{120}$            &    $\mathcal{L}_{132}$              & $ \mathcal{L}_{124}$   &  $\mathcal{L}_{270}$ \\\hline
   2   &    $\mathcal{L}_{140}$    &    $\mathcal{L}_{11}$    & $ \mathcal{L}_{182}$   &  $\mathcal{L}_{186}$    & 2   &    $\mathcal{L}_{238}$  &    $\mathcal{L}_{9}$    & $ \mathcal{L}_{234}$   &  $\mathcal{L}_{244}$    &2   &    $\mathcal{L}_{69}$       &    $\mathcal{L}_{73}$    & $ \mathcal{L}_{65}$   &  $\mathcal{L}_{105}$           & 2   &    $\mathcal{L}_{128}$            &    $\mathcal{L}_{129}$              & $ \mathcal{L}_{126}$   &  $\mathcal{L}_{271}$ \\\hline
   3   &    $\mathcal{L}_{139}$    &    $\mathcal{L}_{189}$    & $ \mathcal{L}_{185}$   &  $\mathcal{L}_{229}$    & 3   &    $\mathcal{L}_{237}$  &    $\mathcal{L}_{239}$    & $ \mathcal{L}_{243}$  &  $\mathcal{L}_{245}$    &3   &    $\mathcal{L}_{70}$       &    $\mathcal{L}_{74}$    & $ \mathcal{L}_{66}$   &  $\mathcal{L}_{106}$           & 3   &    $\mathcal{L}_{127}$              &    $\mathcal{L}_{130}$            & $ \mathcal{L}_{125}$   &  $\mathcal{L}_{272}$ \\\noalign{\hrule height 1.pt}
 $x_C=2$   &\multirow{2}{*}{0} & \multirow{2}{*}{1} & \multirow{2}{*}{2} & \multirow{2}{*}{3}                                            & $x_C=2$ & \multirow{2}{*}{0} & \multirow{2}{*}{1} & \multirow{2}{*}{2} & \multirow{2}{*}{3}            & $x_C=2$ & \multirow{2}{*}{0} & \multirow{2}{*}{1} & \multirow{2}{*}{2} & \multirow{2}{*}{3}                                    &  $x_C=2$ & \multirow{2}{*}{0} & \multirow{2}{*}{1} & \multirow{2}{*}{2} & \multirow{2}{*}{3} \\
 $x_D=0$   &   &   &   &                                                                                                                 & $x_D=1$ &   &   &   &                                                                                  & $x_D=2$ &   &   &   &                                                                                                          &  $x_D=3$ &   &   &   &   \\\noalign{\hrule height 0.9pt}
 0   &    $\mathcal{L}_{146}$  &    $\mathcal{L}_{144}$    & $ \mathcal{L}_{142}$  &  $\mathcal{L}_{153}$         &0   &    $\mathcal{L}_{166}$  &    $\mathcal{L}_{208}$    & $ \mathcal{L}_{203}$  &  $\mathcal{L}_{214}$  & 0   &    $\mathcal{L}_{24}$  &    $\mathcal{L}_{29}$    & $ \mathcal{L}_{18}$   &  $\mathcal{L}_{21}$     & 0   &    $\mathcal{L}_{85}$              &    $\mathcal{L}_{89}$               & $ \mathcal{L}_{81}$   &  $\mathcal{L}_{93}$ \\\hline
 1   &    $\mathcal{L}_{145}$  &    $\mathcal{L}_{143}$    & $ \mathcal{L}_{141}$  &  $\mathcal{L}_{154}$         &1   &    $\mathcal{L}_{207}$  &    $\mathcal{L}_{211}$    & $ \mathcal{L}_{204}$  &  $\mathcal{L}_{215}$  & 1   &    $\mathcal{L}_{23}$  &    $\mathcal{L}_{30}$    & $ \mathcal{L}_{17}$   &  $\mathcal{L}_{22}$     & 1   &    $\mathcal{L}_{86}$              &    $\mathcal{L}_{90}$                & $ \mathcal{L}_{82}$   &  $\mathcal{L}_{94}$ \\\hline
 2   &    $\mathcal{L}_{148}$  &    $\mathcal{L}_{150}$    & $\mathcal{L}_{151}$   &  $\mathcal{L}_{155}$         &2   &    $\mathcal{L}_{205}$  &    $\mathcal{L}_{206}$    & $\mathcal{L}_{7}$   &  $\mathcal{L}_{216}$  & 2   &    $\mathcal{L}_{26}$  &    $\mathcal{L}_{27}$    & $ \mathcal{L}_{19}$   &  $\mathcal{L}_{31}$     & 2   &    $\mathcal{L}_{83}$              &    $\mathcal{L}_{87}$                 & $ \mathcal{L}_{5}$   &  $\mathcal{L}_{79}$ \\\hline
 3   &    $\mathcal{L}_{147}$  &    $\mathcal{L}_{149}$    & $\mathcal{L}_{152}$   &  $\mathcal{L}_{156}$         &3   &    $\mathcal{L}_{3}$  &    $\mathcal{L}_{209}$    & $\mathcal{L}_{213}$   &  $\mathcal{L}_{217}$  & 3   &    $\mathcal{L}_{25}$  &    $\mathcal{L}_{28}$    & $ \mathcal{L}_{20}$   &  $\mathcal{L}_{32}$     & 3   &    $\mathcal{L}_{84}$              &    $\mathcal{L}_{88}$                & $ \mathcal{L}_{1}$   &  $\mathcal{L}_{80}$ \\\noalign{\hrule height 1.pt}

 $x_C=3$   &\multirow{2}{*}{0} & \multirow{2}{*}{1} & \multirow{2}{*}{2} & \multirow{2}{*}{3}          & $x_C=3$ & \multirow{2}{*}{0} & \multirow{2}{*}{1} & \multirow{2}{*}{2} & \multirow{2}{*}{3}                                                & $x_C=3$ & \multirow{2}{*}{0} & \multirow{2}{*}{1} & \multirow{2}{*}{2} & \multirow{2}{*}{3}                                               &  $x_C=3$ & \multirow{2}{*}{0} & \multirow{2}{*}{1} & \multirow{2}{*}{2} & \multirow{2}{*}{3} \\
 $x_D=0$   &   &   &   &                                                                               & $x_D=1$ &   &   &   &                                                                                                                      & $x_D=2$ &   &   &   &                                                                                                                     &  $x_D=3$ &   &   &   &   \\\noalign{\hrule height 0.9pt}
 0   &    $\mathcal{L}_{161}$  &    $\mathcal{L}_{138}$    & $ \mathcal{L}_{157}$  &  $\mathcal{L}_{199}$ & 0   &   $\mathcal{L}_{212}$  &    $\mathcal{L}_{222}$   & $\mathcal{L}_{220}$    &  $\mathcal{L}_{218}$     & 0   &    $\mathcal{L}_{39}$  &    $\mathcal{L}_{43}$    & $\mathcal{L}_{33}$      &  $\mathcal{L}_{77}$          &0   &    $\mathcal{L}_{96}$  &    $\mathcal{L}_{100}$    & $ \mathcal{L}_{95}$   &  $\mathcal{L}_{275}$ \\\hline 
 1   &    $\mathcal{L}_{162}$  &    $\mathcal{L}_{165}$    & $ \mathcal{L}_{158}$  &  $\mathcal{L}_{200}$ & 1   &   $\mathcal{L}_{219}$  &    $\mathcal{L}_{262}$   & $ \mathcal{L}_{227}$   &  $\mathcal{L}_{266}$     & 1   &    $\mathcal{L}_{40}$  &    $\mathcal{L}_{44}$    & $ \mathcal{L}_{34}$     &  $\mathcal{L}_{78}$          &1   &    $\mathcal{L}_{99}$  &    $\mathcal{L}_{101}$    & $ \mathcal{L}_{274}$   &  $\mathcal{L}_{276}$ \\\hline 
 2   &    $\mathcal{L}_{159}$  &    $\mathcal{L}_{163}$    & $\mathcal{L}_{197}$   &  $\mathcal{L}_{201}$ & 2   &    $\mathcal{L}_{210}$  &    $\mathcal{L}_{263}$    & $\mathcal{L}_{228}$    &  $\mathcal{L}_{267}$    & 2   &    $\mathcal{L}_{37}$  &    $\mathcal{L}_{41}$    & $ \mathcal{L}_{36}$     &  $\mathcal{L}_{75}$          &2   &    $\mathcal{L}_{97}$  &    $\mathcal{L}_{102}$    & $ \mathcal{L}_{91}$   &  $\mathcal{L}_{277}$ \\\hline 
 3   &    $\mathcal{L}_{160}$  &    $\mathcal{L}_{164}$    & $\mathcal{L}_{198}$   &  $\mathcal{L}_{202}$ & 3   &    $\mathcal{L}_{221}$  &    $\mathcal{L}_{264}$    & $\mathcal{L}_{265}$    &  $\mathcal{L}_{268}$    & 3   &    $\mathcal{L}_{38}$  &    $\mathcal{L}_{42}$    & $\mathcal{L}_{35}$      &  $\mathcal{L}_{76}$          &3   &    $\mathcal{L}_{98}$  &    $\mathcal{L}_{273}$    & $ \mathcal{L}_{92}$   &  $\mathcal{L}_{278}$ \\\noalign{\hrule height 1.pt}
\end{tabular}
\caption{Example 1: Latin-Hyper Cube representing the Relay Map where $x_E=2$, entries $x_A$'s and $x_B$'s are as mentioned, $x_C$'s entries are along the rows and $x_D$'s entries are along the columns of each $4\times 4$ matrix for fixed values of $x_A$ and $x_B$.}
\vspace{-.3cm}
\end{table*}

\begin{table*}[ht]
\vspace{-.2 cm}
\tiny
\centering
\renewcommand{\arraystretch}{1.2}
\begin{tabular}{!{\vrule width 1.pt}c!{\vrule width 0.9pt}p{0.32cm}|p{0.32cm}|p{0.32cm}|p{0.3cm}!{\vrule width 1.pt}c!{\vrule width 0.9pt}p{0.32cm}|p{0.32cm}|p{0.32cm}|p{0.32cm}!{\vrule width 1.pt}c!{\vrule width 0.9pt}p{0.32cm}|p{0.32cm}|p{0.32cm}|p{0.32cm}!{\vrule width 1.pt}c!{\vrule width 0.9pt}p{0.32cm}|p{0.32cm}|p{0.32cm}|p{0.32cm}!{\vrule width 1.pt}}\noalign{\hrule height 1.pt}
 $x_C=0$   &\multirow{2}{*}{0} & \multirow{2}{*}{1} & \multirow{2}{*}{2} & \multirow{2}{*}{3}                  & $x_C=0$ & \multirow{2}{*}{0} & \multirow{2}{*}{1} & \multirow{2}{*}{2} & \multirow{2}{*}{3}                                    & $x_C=0$ & \multirow{2}{*}{0} & \multirow{2}{*}{1} & \multirow{2}{*}{2} & \multirow{2}{*}{3}            &  $x_C=0$ & \multirow{2}{*}{0} & \multirow{2}{*}{1} & \multirow{2}{*}{2} & \multirow{2}{*}{3} \\
 $x_D=0$   &   &   &   &                                                                                       & $x_D=1$ &   &   &   &                                                                                                          & $x_D=2$ &   &   &   &                                                                                  &  $x_D=3$ &   &   &   &   \\\noalign{\hrule height 0.9pt}
   0       &    $\mathcal{L}_{228}$  &    $\mathcal{L}_{196}$    & $ \mathcal{L}_{235}$   &  $\mathcal{L}_{232}$   & 0   &    $\mathcal{L}_{252}$  &    $\mathcal{L}_{276}$    & $\mathcal{L}_{281}$   &  $\mathcal{L}_{184}$               & 0   &    $\mathcal{L}_{132}$    &    $\mathcal{L}_{120}$    & $\mathcal{L}_{266}$     &  $\mathcal{L}_{124}$  &  0       &    $\mathcal{L}_{312}$             &    $\mathcal{L}_{60}$    & $ \mathcal{L}_{68}$   &  $\mathcal{L}_{64}$ \\\hline
   1       &    $\mathcal{L}_{241}$  &    $\mathcal{L}_{195}$    & $ \mathcal{L}_{236}$   &  $\mathcal{L}_{231}$   & 1   &    $\mathcal{L}_{272}$  &    $\mathcal{L}_{275}$    & $ \mathcal{L}_{282}$    &  $\mathcal{L}_{283}$              & 1   &   $\mathcal{L}_{131}$    &    $\mathcal{L}_{119}$    & $ \mathcal{L}_{267}$    &  $\mathcal{L}_{123}$  &  1       &    $\mathcal{L}_{313}$             &    $\mathcal{L}_{59}$    & $ \mathcal{L}_{67}$   &  $\mathcal{L}_{63}$ \\\hline
   2       &    $\mathcal{L}_{239}$  &    $\mathcal{L}_{240}$    & $ \mathcal{L}_{250}$   &  $\mathcal{L}_{246}$   & 2   &    $\mathcal{L}_{189}$  &    $\mathcal{L}_{278}$   & $ \mathcal{L}_{229}$   &  $\mathcal{L}_{185}$              & 2   &    $\mathcal{L}_{130}$    &    $\mathcal{L}_{134}$    & $ \mathcal{L}_{268}$    &  $\mathcal{L}_{301}$   &  2    &    $\mathcal{L}_{72}$             &    $\mathcal{L}_{70}$    & $ \mathcal{L}_{106}$   &  $\mathcal{L}_{66}$ \\\hline
   3       &    $\mathcal{L}_{227}$  &    $\mathcal{L}_{251}$    & $ \mathcal{L}_{249}$   &  $\mathcal{L}_{234}$   & 3   &    $\mathcal{L}_{179}$  &    $\mathcal{L}_{277}$   & $\mathcal{L}_{186}$    &  $\mathcal{L}_{182}$              & 3   &    $\mathcal{L}_{129}$    &    $\mathcal{L}_{133}$    & $\mathcal{L}_{300}$     &  $\mathcal{L}_{302}$   &  3    &    $\mathcal{L}_{73}$             &    $\mathcal{L}_{69}$    & $ \mathcal{L}_{105}$   &  $\mathcal{L}_{65}$ \\\noalign{\hrule height 1.pt}
 $x_C=1$   &\multirow{2}{*}{0} & \multirow{2}{*}{1} & \multirow{2}{*}{2} & \multirow{2}{*}{3}                                   & $x_C=1$ & \multirow{2}{*}{0} & \multirow{2}{*}{1} & \multirow{2}{*}{2} & \multirow{2}{*}{3}                & $x_C=1$ & \multirow{2}{*}{0} & \multirow{2}{*}{1} & \multirow{2}{*}{2} & \multirow{2}{*}{3}                   &  $x_C=1$ & \multirow{2}{*}{0} & \multirow{2}{*}{1} & \multirow{2}{*}{2} & \multirow{2}{*}{3} \\
 $x_D=0$   &   &   &   &                                                                                                        & $x_D=1$ &   &   &   &                                                                                      & $x_D=2$ &   &   &   &                                                                                         &  $x_D=3$ &   &   &   &   \\\noalign{\hrule height 0.9pt}
   0   &    $\mathcal{L}_{192}$    &    $\mathcal{L}_{242}$    & $ \mathcal{L}_{256}$   &  $\mathcal{L}_{188}$    & 0   &    $\mathcal{L}_{178}$  &    $\pmb{\mathcal{L}}_{\pmb{5}}$    & $ \mathcal{L}_{287}$   &  $\mathcal{L}_{289}$    &0   &    $\mathcal{L}_{303}$       &    $\pmb{\mathcal{L}}_{\pmb{7}}$    & $ \mathcal{L}_{112}$   &  $\mathcal{L}_{108}$           & 0   &    $\mathcal{L}_{314}$            &    $\mathcal{L}_{56}$              & $ \mathcal{L}_{52}$   &  $\mathcal{L}_{318}$ \\\hline
   1   &    $\mathcal{L}_{191}$    &    $\mathcal{L}_{253}$    & $ \mathcal{L}_{255}$   &  $\mathcal{L}_{187}$    & 1   &    $\mathcal{L}_{176}$  &    $\pmb{\mathcal{L}}_{\pmb{1}}$    & $ \mathcal{L}_{183}$   &  $\mathcal{L}_{290}$    &1   &    $\mathcal{L}_{304}$       &    $\pmb{\mathcal{L}}_{\pmb{3}}$    & $ \mathcal{L}_{111}$   &  $\mathcal{L}_{107}$           & 1   &    $\mathcal{L}_{315}$            &    $\mathcal{L}_{55}$              & $ \mathcal{L}_{51}$   &  $\mathcal{L}_{138}$ \\\hline
   2   &    $\mathcal{L}_{194}$    &    $\mathcal{L}_{180}$    & $ \mathcal{L}_{233}$   &  $\mathcal{L}_{248}$    & 2   &    $\mathcal{L}_{284}$  &    $\mathcal{L}_{286}$    & $ \mathcal{L}_{181}$   &  $\mathcal{L}_{174}$    &2   &    $\mathcal{L}_{118}$       &    $\mathcal{L}_{114}$    & $ \mathcal{L}_{122}$   &  $\mathcal{L}_{110}$           & 2   &    $\mathcal{L}_{58}$            &    $\mathcal{L}_{316}$              & $ \mathcal{L}_{62}$   &  $\mathcal{L}_{48}$ \\\hline
   3   &    $\mathcal{L}_{193}$    &    $\mathcal{L}_{190}$    & $ \mathcal{L}_{230}$   &  $\mathcal{L}_{247}$    & 3   &    $\mathcal{L}_{285}$  &    $\mathcal{L}_{175}$    & $ \mathcal{L}_{288}$  &  $\mathcal{L}_{173}$    &3   &    $\mathcal{L}_{117}$       &    $\mathcal{L}_{113}$    & $ \mathcal{L}_{121}$   &  $\mathcal{L}_{109}$           & 3   &    $\mathcal{L}_{57}$              &    $\mathcal{L}_{317}$            & $ \mathcal{L}_{61}$   &  $\mathcal{L}_{47}$ \\\noalign{\hrule height 1.pt}
 $x_C=2$   &\multirow{2}{*}{0} & \multirow{2}{*}{1} & \multirow{2}{*}{2} & \multirow{2}{*}{3}                                            & $x_C=2$ & \multirow{2}{*}{0} & \multirow{2}{*}{1} & \multirow{2}{*}{2} & \multirow{2}{*}{3}            & $x_C=2$ & \multirow{2}{*}{0} & \multirow{2}{*}{1} & \multirow{2}{*}{2} & \multirow{2}{*}{3}                                    &  $x_C=2$ & \multirow{2}{*}{0} & \multirow{2}{*}{1} & \multirow{2}{*}{2} & \multirow{2}{*}{3} \\
 $x_D=0$   &   &   &   &                                                                                                                 & $x_D=1$ &   &   &   &                                                                                  & $x_D=2$ &   &   &   &                                                                                                          &  $x_D=3$ &   &   &   &   \\\noalign{\hrule height 0.9pt}
 0   &    $\mathcal{L}_{254}$  &    $\mathcal{L}_{259}$    & $ \mathcal{L}_{262}$  &  $\mathcal{L}_{263}$         &0   &    $\mathcal{L}_{165}$  &    $\mathcal{L}_{162}$    & $ \mathcal{L}_{200}$  &  $\mathcal{L}_{158}$  & 0   &    $\mathcal{L}_{102}$  &    $\mathcal{L}_{99}$    & $ \mathcal{L}_{307}$   &  $\mathcal{L}_{309}$     & 0   &    $\mathcal{L}_{44}$              &    $\mathcal{L}_{40}$               & $ \mathcal{L}_{78}$   &  $\mathcal{L}_{34}$ \\\hline
 1   &    $\mathcal{L}_{257}$  &    $\mathcal{L}_{212}$    & $ \mathcal{L}_{261}$  &  $\mathcal{L}_{265}$         &1   &    $\mathcal{L}_{291}$  &    $\mathcal{L}_{161}$    & $ \mathcal{L}_{199}$  &  $\mathcal{L}_{157}$  & 1   &    $\mathcal{L}_{100}$  &    $\mathcal{L}_{306}$    & $ \mathcal{L}_{12}$   &  $\mathcal{L}_{310}$     & 1   &    $\mathcal{L}_{43}$              &    $\mathcal{L}_{39}$                & $ \mathcal{L}_{10}$   &  $\mathcal{L}_{33}$ \\\hline
 2   &    $\mathcal{L}_{218}$  &    $\mathcal{L}_{260}$    & $\mathcal{L}_{264}$   &  $\mathcal{L}_{273}$         &2   &    $\mathcal{L}_{164}$  &    $\mathcal{L}_{160}$    & $\mathcal{L}_{8}$   &  $\mathcal{L}_{198}$  & 2   &    $\mathcal{L}_{305}$  &    $\mathcal{L}_{98}$    & $ \mathcal{L}_{16}$   &  $\mathcal{L}_{92}$     & 2   &    $\mathcal{L}_{42}$              &    $\mathcal{L}_{38}$                 & $ \mathcal{L}_{14}$   &  $\mathcal{L}_{319}$ \\\hline
 3   &    $\mathcal{L}_{258}$  &    $\mathcal{L}_{210}$    & $\mathcal{L}_{219}$   &  $\mathcal{L}_{274}$         &3   &    $\mathcal{L}_{163}$  &    $\mathcal{L}_{159}$    & $\mathcal{L}_{4}$   &  $\mathcal{L}_{197}$  & 3   &    $\mathcal{L}_{101}$  &    $\mathcal{L}_{97}$    & $ \mathcal{L}_{308}$   &  $\mathcal{L}_{92}$     & 3   &    $\mathcal{L}_{41}$              &    $\mathcal{L}_{37}$                & $ \mathcal{L}_{2}$   &  $\mathcal{L}_{320}$ \\\noalign{\hrule height 1.pt}

 $x_C=3$   &\multirow{2}{*}{0} & \multirow{2}{*}{1} & \multirow{2}{*}{2} & \multirow{2}{*}{3}          & $x_C=3$ & \multirow{2}{*}{0} & \multirow{2}{*}{1} & \multirow{2}{*}{2} & \multirow{2}{*}{3}                                                & $x_C=3$ & \multirow{2}{*}{0} & \multirow{2}{*}{1} & \multirow{2}{*}{2} & \multirow{2}{*}{3}                                               &  $x_C=3$ & \multirow{2}{*}{0} & \multirow{2}{*}{1} & \multirow{2}{*}{2} & \multirow{2}{*}{3} \\
 $x_D=0$   &   &   &   &                                                                               & $x_D=1$ &   &   &   &                                                                                                                      & $x_D=2$ &   &   &   &                                                                                                                     &  $x_D=3$ &   &   &   &   \\\noalign{\hrule height 0.9pt}
 0   &    $\mathcal{L}_{211}$  &    $\mathcal{L}_{207}$    & $ \mathcal{L}_{270}$  &  $\mathcal{L}_{204}$ & 0   &   $\mathcal{L}_{150}$  &    $\mathcal{L}_{294}$   & $\mathcal{L}_{148}$    &  $\mathcal{L}_{298}$     & 0   &    $\mathcal{L}_{90}$  &    $\mathcal{L}_{86}$    & $\mathcal{L}_{94}$      &  $\mathcal{L}_{83}$          &0   &    $\mathcal{L}_{30}$  &    $\mathcal{L}_{26}$    & $ \mathcal{L}_{22}$   &  $\mathcal{L}_{115}$ \\\hline 
 1   &    $\mathcal{L}_{208}$  &    $\mathcal{L}_{166}$    & $ \mathcal{L}_{269}$  &  $\mathcal{L}_{203}$ & 1   &   $\mathcal{L}_{149}$  &    $\mathcal{L}_{295}$   & $ \mathcal{L}_{147}$   &  $\mathcal{L}_{299}$     & 1   &    $\mathcal{L}_{89}$  &    $\mathcal{L}_{13}$    & $ \mathcal{L}_{93}$     &  $\mathcal{L}_{81}$          &1   &    $\mathcal{L}_{29}$  &    $\mathcal{L}_{25}$    & $ \mathcal{L}_{21}$   &  $\mathcal{L}_{15}$ \\\hline 
 2   &    $\mathcal{L}_{209}$  &    $\mathcal{L}_{224}$    & $\mathcal{L}_{214}$   &  $\mathcal{L}_{279}$ & 2   &    $\mathcal{L}_{292}$  &    $\mathcal{L}_{296}$    & $\mathcal{L}_{156}$    &  $\mathcal{L}_{142}$    & 2   &    $\mathcal{L}_{88}$  &    $\mathcal{L}_{84}$    & $ \mathcal{L}_{80}$     &  $\mathcal{L}_{311}$          &2   &    $\mathcal{L}_{28}$  &    $\mathcal{L}_{321}$    & $ \mathcal{L}_{32}$   &  $\mathcal{L}_{18}$ \\\hline 
 3   &    $\mathcal{L}_{206}$  &    $\mathcal{L}_{205}$    & $\mathcal{L}_{271}$   &  $\mathcal{L}_{280}$ & 3   &    $\mathcal{L}_{293}$  &    $\mathcal{L}_{297}$    & $\mathcal{L}_{155}$    &  $\mathcal{L}_{141}$    & 3   &    $\mathcal{L}_{87}$  &    $\mathcal{L}_{83}$    & $\mathcal{L}_{79}$      &  $\mathcal{L}_{9}$          &3   &    $\mathcal{L}_{27}$  &    $\mathcal{L}_{322}$    & $ \mathcal{L}_{31}$   &  $\mathcal{L}_{11}$ \\\noalign{\hrule height 1.pt}
\end{tabular}
\caption{Example 1: Latin-Hyper Cube representing the Relay Map where $x_E=3$, entries $x_A$'s and $x_B$'s are as mentioned, $x_C$'s entries are along the rows and $x_D$'s entries are along the columns of each $4\times 4$ matrix for fixed values of $x_A$ and $x_B$.}
\vspace{-1cm}
\end{table*}


\begin{lemma}
The clustering map used at the relay node R cannot remove the singular fade spaces $\left\langle \left[ \Delta x_1, \Delta x_2,..., \Delta x_n \right]\right\rangle $ where at least one of $\Delta x_k=0$ for some $k=1,2,...,n$ and simultaneously satisfy the mutually exclusive law.
\end{lemma}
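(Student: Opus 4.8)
The plan is to argue by contradiction: suppose some relay clustering $\mathcal{C}$ --- equivalently, the induced map $\mathcal{M}^{H_1,\dots,H_n}$ --- simultaneously removes a singular fade subspace $\langle[\Delta x_1,\Delta x_2,\dots,\Delta x_n]\rangle^{\bot}$ with $\Delta x_k=0$ for some $k\in\{1,\dots,n\}$ and satisfies the exclusive law~(\ref{mel}). I would then exhibit a single pair of message tuples whose forced coincidence under the first hypothesis is exactly what the second forbids. To build the pair, note first that the defining difference vector $[\Delta x_1,\dots,\Delta x_n]$ is non-zero (it arises from two distinct tuples, as in the derivation of~(\ref{nullspacen})), so together with $\Delta x_k=0$ there is an index $j\ne k$ with $\Delta x_j\ne 0$. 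Since every component $\Delta x_i$ lies in the difference constellation $\Delta\mathcal{S}$, choose, coordinate by coordinate, symbols $x_i,x'_i\in\mathcal{S}$ with $x_i-x'_i=\Delta x_i$; in particular take $x_k=x'_k$, while $(x_1,\dots,x_n)\ne(x'_1,\dots,x'_n)$ because they differ in coordinate $j$.

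Next I would invoke the meaning of ``removal''. Pick any fade state $(H_1,\dots,H_n)$ in the subspace; by the definition of $\langle\,\cdot\,\rangle^{\bot}$ we have $\sum_{i=1}^{n}H_i\Delta x_i=0$, hence $\sum_{i=1}^{n}H_ix_i=\sum_{i=1}^{n}H_ix'_i$, so the tuples $(x_1,\dots,x_n)$ and $(x'_1,\dots,x'_n)$ coincide in the effective constellation received at R. By Definition~4 (a clustering removes a singular fade state only if all such coinciding pairs are placed in one cluster), $\mathcal{C}$ must assign $(x_1,\dots,x_n)$ and $(x'_1,\dots,x'_n)$ the same cluster label, i.e.\ $\mathcal{M}^{H_1,\dots,H_n}(x_1,\dots,x_n)=\mathcal{M}^{H_1,\dots,H_n}(x'_1,\dots,x'_n)$.

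Finally I would read off the contradiction. The pair just constructed satisfies $x_k=x'_k$ together with $(x_1,\dots,x_{k-1},x_{k+1},\dots,x_n)\ne(x'_1,\dots,x'_{k-1},x'_{k+1},\dots,x'_n)$ --- precisely the configuration of the $k$-th exclusive law~(\ref{mel}), which demands $\mathcal{M}_k^{H_1,\dots,H_n}(x_1,\dots,x_n)\ne\mathcal{M}_k^{H_1,\dots,H_n}(x'_1,\dots,x'_n)$. Equivalently, in the Latin-Hyper-Cube picture of Section~III, merging these two tuples would put the same symbol twice into the $(n-1)$-dimensional slice $\mathcal{C}_k^{l}$ obtained by fixing the $k$-th coordinate to $x_k$, destroying the Latin property. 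This contradicts the standing hypothesis and proves the lemma. The only delicate point in the argument is the middle step: one must use that removal of the fade subspace forces \emph{every} pair annihilated by the fade state to be co-clustered --- not merely pairs with a prescribed difference vector --- which is what licenses singling out the pair with a zero in coordinate $k$; beyond that the proof is a one-line observation requiring no computation.
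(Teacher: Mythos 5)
Your proposal is correct and follows essentially the same route as the paper's own proof: both exhibit the pair $(x_1,\dots,x_k,\dots,x_n)$ and $(x'_1,\dots,x_k,\dots,x'_n)$ sharing the $k$-th coordinate that removal forces into one cluster, and observe that this is exactly what the $k$-th exclusive law (\ref{mel}) forbids. Your write-up merely makes explicit a few details the paper leaves implicit (the existence of an index $j\ne k$ with $\Delta x_j\ne 0$, and that \emph{every} coinciding pair must be co-clustered), so no substantive difference to report.
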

\begin{proof} Let $\mathcal{S}=\left\langle \left[x_1-x'_1, x_2-x'_2, ..., x_n-x'_n\right]\right\rangle^{ \bot} $ be a singular fade state where for some $1 \leq k \leq n$, $x_k-x'_k=0$. Then, in order to remove $\mathcal{S}$, $ ( x_1,  x_2,..., x_k,..., x_n) $ and $ ( x'_1,  x'_2,..., x_k,..., x'_n) $ that must be kept in the same cluster. This would imply user $X_k$ not being able to distinguish between the messages $x_l$ and $x'_l$ for some $1 \leq l \leq n, ~l \neq k,~x_l \neq x'_l$ sent by user $X_l$. This will clearly violate the mutually exclusive law, since in order to satisfy the mutually exclusive law, for the same value of $x_k$, all possible $n$-tuples of messages must be kept in different clusters. These two statements cannot be satisfied at the same time, hence such a singular fade subspace cannot be removed if the mutually exclusive law has to be satisfied by the relay map used in the BC phase. 
\end{proof}
We refer to the singular fade subspaces whose harmful effects cannot be removed by a proper choice of the clustering, as the \textit{non-removable singular fade subspaces} also talked about in \cite{MuR}. 

\begin{corollary} There are $\left[ (\frac{M}{2})^n - (\frac{M}{2}) +1\right] M^{n-1}$ Removable and $\sum^{n-1}_{k=1} (^{n}_{k}) \left[ (\frac{M}{2})^k - (\frac{M}{2}) +1\right] M^{k-1}$ Non-Removable Singular Fade Subspaces for $n$-way relaying when $M$-PSK constellation is used at the end nodes.
\end{corollary}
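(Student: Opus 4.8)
The plan is to obtain the Corollary as a direct bookkeeping consequence of Theorem~1 and Lemma~4, once the singular fade subspaces are sorted by the number $k$ of non-zero entries of the difference vector $v=\left[\Delta x_1,\Delta x_2,\ldots,\Delta x_n\right]$ that defines them. Theorem~1's enumeration is already organised this way: for each $k\in\{1,\ldots,n\}$ there are exactly $\binom{n}{k}\left[(\tfrac{M}{2})^{k}-\tfrac{M}{2}+1\right]M^{k-1}$ subspaces of the form $\left\langle v\right\rangle^{\bot}$ whose $v$ has $k$ non-zero components. First I would check that these $k$-classes are genuinely disjoint, so that Theorem~1's total really is a partition: from $\left\langle v\right\rangle^{\bot}$ one recovers the line $\left\langle v\right\rangle$ as its dot-product orthogonal complement (the symmetric bilinear form $\sum_i w_iv_i$ on $\mathbb{C}^n$ is non-degenerate, so $(\left\langle v\right\rangle^{\bot})^{\bot}=\left\langle v\right\rangle$), and the support of $v$ — hence $k$ — is an invariant of that line.

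The key step is to show that a subspace is non-removable precisely when $k<n$. If $k<n$, then $v$ has a zero entry, and choosing any pair $(x_1,\ldots,x_n),(x'_1,\ldots,x'_n)\in\mathcal{S}^n$ with $x_i-x'_i=\Delta x_i$ (possible since $\Delta x_i\in\Delta\mathcal{S}$) puts us exactly in the situation of Lemma~4, which shows no exclusive-law clustering can remove it. Conversely I would argue that every subspace with $k=n$ is removable. Fix a \emph{generic} fade state $H$ in such a subspace $\left\langle v\right\rangle^{\bot}$: the pairs a clustering is forced to merge to remove $H$ are those $(y,y')$ with $\sum_i H_i(y_i-y'_i)=0$, and since all difference vectors $y-y'$ lie in the finite set $\Delta\mathcal{S}^n$, avoiding the finitely many proper subspaces $\left\langle v,d\right\rangle^{\bot}$ forces $y-y'\in\mathbb{C}v$. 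Hence each forced cluster is the intersection of $\mathcal{S}^n$ with an affine line of direction $v$, so two distinct cluster members differ by a non-zero multiple of $v$ and therefore — every entry of $v$ being non-zero — differ in every coordinate. Consequently no $(n-1)$-dimensional slice obtained by fixing one coordinate contains two members of the same cluster, the forced clustering repeats no symbol inside any slice, and Algorithm~1 completes it to a bona fide $n$-fold Latin Hyper-Cube of side $M$; thus the $k=n$ class is exactly the removable one.

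Combining the two, the removable subspaces are the $k=n$ term of Theorem~1's partition, namely $\binom{n}{n}\left[(\tfrac{M}{2})^{n}-\tfrac{M}{2}+1\right]M^{n-1}=\left[(\tfrac{M}{2})^{n}-\tfrac{M}{2}+1\right]M^{n-1}$, and the non-removable ones are everything else, $\sum_{k=1}^{n-1}\binom{n}{k}\left[(\tfrac{M}{2})^{k}-\tfrac{M}{2}+1\right]M^{k-1}$; the two sums reconstitute the total of Theorem~1. The step I expect to be the real work is the converse removability claim for the $k=n$ class: Lemma~4 only delivers the ``some $\Delta x_m=0\Rightarrow$ non-removable'' direction, so one must argue that when every $\Delta x_m\neq 0$ the merging imposed by the singularity removal constraint is always consistent with a Latin Hyper-Cube — this is the slice-disjointness observation above — and that Algorithm~1 terminates on a valid hyper-cube satisfying the side-$M$ and entry-set conditions of Definition~1. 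Everything else is arithmetic on the Theorem~1 formula.
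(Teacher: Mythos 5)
Your proposal is correct and follows essentially the route the paper intends: the Corollary is obtained by splitting the sum in Theorem~1 at $k=n$ versus $k<n$, with Lemma~4 supplying the non-removability of every subspace whose defining difference vector has a zero component. You additionally prove the converse (that every $k=n$ subspace is genuinely removable, via the generic-fade-state and slice-disjointness argument feeding into Algorithm~1), which the paper leaves implicit and merely relies on through its Section~V construction — a worthwhile strengthening, not a deviation.
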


\begin{corollary} The number of non-removable singular subspaces is $O(M^{n-1})$ while the number of removable singular fade subspaces is $O(M^{n})$. 
\end{corollary}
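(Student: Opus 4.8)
The plan is to read both orders off the explicit counts already established in Corollary~1, in the asymptotic regime in which $n$ is held fixed and $M\to\infty$, so that the binomial coefficients $\binom{n}{k}$ and the powers $2^{k}$ occurring below are constants absorbed by the $O(\cdot)$. For the removable subspaces the count is the single product $\bigl[(M/2)^{n}-(M/2)+1\bigr]M^{n-1}$: its growth is governed by the monomial $(M/2)^{n}M^{n-1}$, the additive correction $-(M/2)+1$ being of strictly lower order. This is exactly the term that the proof of Theorem~1 isolates as the $k=n$ contribution, and its operative power — the exponent $n$ on the $(M/2)$-factor, i.e.\ the number of non-zero difference components — is the $O(M^{n})$ of the statement. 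For the non-removable subspaces the count is the finite sum $\sum_{k=1}^{n-1}\binom{n}{k}\bigl[(M/2)^{k}-(M/2)+1\bigr]M^{k-1}$, and the only genuine step is to decide which summand dominates.

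First I would observe that, for fixed $k$, the $k$-th summand has leading monomial a positive constant times $(M/2)^{k}M^{k-1}$, whose $(M/2)$-exponent $k$ strictly increases with $k$; hence over $k=1,\dots,n-1$ the term with $k=n-1$ dominates and the whole sum has that order, namely governed by $(M/2)^{n-1}$ — the $O(M^{n-1})$ of the statement. Comparing, the removable count carries a $(M/2)^{n}$ while the non-removable count carries only a $(M/2)^{n-1}$, so the removable count is of strictly higher order and their ratio tends to infinity with $M$. The same dichotomy is visible intrinsically, without the formulas: by Lemma~4 a non-removable subspace arises from a difference vector $[\Delta x_{1},\dots,\Delta x_{n}]$ with at least one zero entry, hence with at most $n-1$ non-zero entries, and by the circle structure of Lemma~2 such vectors realise only $\Theta\bigl((M/2)^{n-1}\bigr)$ magnitude patterns, against $\Theta\bigl((M/2)^{n}\bigr)$ for the all-non-zero vectors that give the removable subspaces.

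I do not expect a real obstacle: the statement is a direct corollary of the enumeration in Corollary~1, and the only care needed is (i) fixing the asymptotic convention — $M\to\infty$, $n$ fixed, so that $\binom{n}{k}$ and $2^{k}$ are $O(1)$ — and (ii) checking that the subdominant pieces, both the $-(M/2)+1$ inside each bracket and the summands with $k<n-1$, are of strictly smaller order and hence irrelevant to the leading behaviour. As a sanity check, the explicit values for $n=2,3,4,5$ quoted from \cite{MuR}, \cite{SVR}, \cite{ShR} and computed here (for $n=5$, $M=4$: $7936$ removable against $6045$ non-removable, summing to the $13981$ of Theorem~1) are consistent with this growth.
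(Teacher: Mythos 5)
Your route is the paper's route: Corollary~2 is stated without proof and is meant to be read directly off the counts in Corollary~1, which is exactly what you do, and your identification of the dominant summands ($k=n$ for the removable count, $k=n-1$ for the non-removable one) is correct. The problem is the step where you extract the order. The leading monomial of the removable count is, as you yourself write, $(M/2)^{n}M^{n-1}=M^{2n-1}/2^{n}$, so that count is $\Theta(M^{2n-1})$; declaring that its ``operative power'' is the exponent $n$ on the $(M/2)$-factor silently discards the factor $M^{n-1}$ contributed by the relative phase vectors, and $\Theta(M^{2n-1})$ is not $O(M^{n})$ for any $n\geq 2$. Likewise the dominant non-removable summand is $\binom{n}{n-1}(M/2)^{n-1}M^{n-2}=\Theta(M^{2n-3})$, which is not $O(M^{n-1})$ once $n\geq 3$. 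Your own sanity check already exposes the mismatch: for $n=5$, $M=4$ the removable count $7936$ is essentially $M^{2n-1}/2^{n}=8192$ and nowhere near $M^{n}=1024$. The same omission recurs in your ``intrinsic'' argument, which counts only the magnitude patterns on the $M/2$ circles of Lemma~2 and ignores the $M^{k-1}$ phase patterns.

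So, as a proof of the literal statement, there is a genuine gap: read with standard big-$O$ in $M$ (with $n$ fixed), the claimed exponents do not follow from Corollary~1 and are in fact contradicted by it; the correct orders are $\Theta(M^{2n-1})$ removable and $\Theta(M^{2n-3})$ non-removable. What does survive, and what the paper actually uses in the sentence immediately after the corollary, is the comparison: the non-removable count is smaller than the removable count by a factor $\Theta(M^{2})$, so the non-removable subspaces form an asymptotically vanishing fraction of the total. Your dominant-term analysis establishes that correctly; you should state the conclusion in that form (or with the corrected exponents) rather than force the computation to land on the exponents printed in the corollary.
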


Thus, the number of non-removable singular fade subspaces is a small fraction of the total number of singular fade subspaces. For the five-way relaying scenario described in the previous section, there are $13981$ singular fade subspaces for five-way relaying, out of which the scheme given in this paper removes $7936$ singular fade subspaces using $5$-fold Latin Hyper-Cubes of side $4$. This can be done, as described above, by first marking the singularity removal constraints in the empty $4 \times 4 \times 4 \times 4 \times 4 $ array and then completing the array to form a $5$-fold Latin Hyper-Cubes of side $4$ using Algorithm 1. We now illustrate the removal of a singular fade state for five-way relaying with the help of the following example.


\begin{example} 
Consider the case for five-way relaying scenario where 4-PSK is used at end nodes A, B, C, D and E, and one of the singular fade subspace to be removed is given by,

\begin{tiny}
\noindent\vspace{-.10 cm}
\begin{displaymath}
\mathcal{S}''= \left\langle \left[ {\begin{array}{c}
-1-j\\
-2j\\
-2j\\
1-j\\
1+j\\
\end{array} } \right]\right\rangle ^{\bot}\hspace{-0.3cm}
~=~\hspace{-0.1cm}\left\langle \left[ {\begin{array}{c}
1-j \\
2 \\
2 \\
1+j \\
-1+j \\
\end{array} } \right]\right\rangle ^{\bot}\hspace{-0.3cm}
\end{displaymath}
\begin{displaymath}
~~~~~~~~~~~~~~=~\hspace{-0.1cm}\left\langle \left[ {\begin{array}{c}
1+j \\
2j \\
2j \\
-1+j \\
-1-j \\
\end{array} } \right]\right\rangle ^{\bot}\hspace{-0.3cm}
~=~\hspace{-0.1cm}\left\langle \left[ {\begin{array}{c}
-1+j \\
-2 \\
-2 \\
-1-j \\
1-j \\
\end{array} } \right]\right\rangle ^{\bot}.
\end{displaymath}

\end{tiny}

The first vector is $\left[ -1-j, ~ -2j, ~ -2j, ~ 1-j, ~ 1+j \right]$. Now, $-1-j$ can be obtained either as a difference of $x_A=-1$ and $x'_A=j$ or as a difference of $x_A=-j$ and $x'_A=1$; $-2j$ can be obtained only as a difference of $-j$ and $j$; $1-j$ can be obtained as a difference of $x_D=1$ and $x'_D=j$ or as a difference of $x_D=-j$ and $x'_D=1$; $1+j$ can be obtained as a difference of $x_E=1$ and $x_E=-j$ or as a difference of $x_E=j$ and $x_E=-1$. Thus, the entries corresponding to {\footnotesize $\left\{(-1,-j,-j,1,1),(j,j,j,j,-j)\right\}$, $\left\{(-1,-j,-j,-j,1),(j,j,j,1,-j)\right\}$, $\left\{(-1,-j,-j,1,j),(j,j,j,j,-1)\right\}$, $\left\{(-1,-j,-j,-j,j),(j,j,j,1,-1)\right\}$, $\left\{(-j,-j,-j,1,1),(1,j,j,j,-j)\right\}$, $\left\{(-j,-j,-j,-j,1),(1,j,j,1,-j)\right\}$, $\left\{(-j,-j,-j,1,j),(1,j,j,j,-1)\right\}$, $\left\{(-j,-j,-j,-j,j),(1,j,j,1,-1)\right\}$ } must lie in the same clustering representing the network coding map used at the relay node in the BC phase. Similarly the constraints resulting from the other three vectors above can be obtained. Replacing $1,j,-1,-j$ with $0,1,2,3$, we get constraints in the form of entries that must be kept the same in the $4 \times 4 \times 4 \times 4 \times 4 $ array representing the clustering. For instance, corresponding to {\footnotesize $\left\{(-1,-j,-j,1,1),(j,j,j,j,-j)\right\}$}, the entries $\left\{(2,3,3,0,0),(1,1,1,1,3)\right\}$ are kept in the same cluster $\mathcal{L}_{1}$ as shown in bold in Tables I, II, III, IV. The constrained Hyper-Cube of side 4 is completed using Algorithm 1 to form a 4-fold Latin Hyper-Cube of side 4. The completed Latin Hyper Cube is as shown in Tables I, II, III and IV, where the constraints are marked in bold. 
\end{example}

Similarly, a (removable) singular fade subspace can be removed by first constraining the array representing the relay map, and then completing the constrained array using the provided algorithm.


\section{Simulation Results}
\vspace{-.1cm}
Simulation results presented in this section identify the cases where the proposed scheme outperforms the naive approach that uses the same map for all fade states and vice verse. Here, the channel states are distributed according to Rician distribution and channel variances equal to 0 dB and the frame length is 256 bits. Fig. \ref{fig:comparison_plot} compares the SNR vs bit-error-rate curves for three-way, four-way and five-way relaying scenario for (a) the adaptive network coding scheme presented in this paper with (b) the non-adaptive network coding using two channel uses, in which the same array is used by the relay as an encoder for all channel conditions. The details of three-way and four-way relaying can be found in \cite{SVR} and \cite{ShR}. The non-adaptive network coding for three-way, four-way and five-way relaying utilizes the same $3$-fold Latin Hyper-Cubes of side $4$, $4$-fold Latin Hyper-Cube of side $4$ and $5$-fold Latin Hyper-Cube of side $4$ (respectively) for all channel conditions.

\begin{figure}[ht]
\vspace{-.1 cm}
\hspace{-.9 cm}
\includegraphics[height=64mm,width=10.2cm]{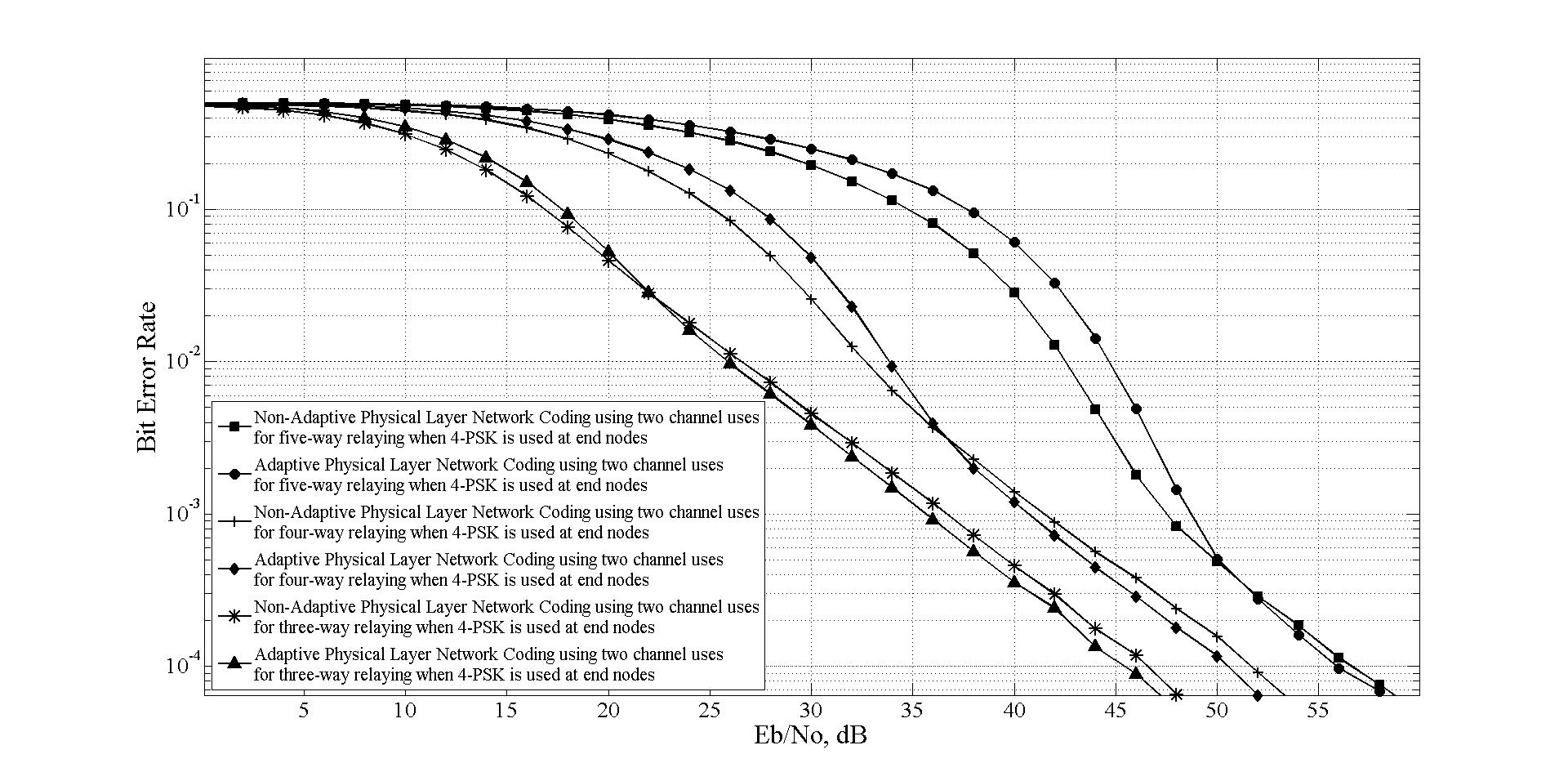}
\vspace{-1 cm}
\caption{SNR vs BER curves for different schemes for multi-way relaying when the Rician Factors is 20 dB}	
\label{fig:comparison_plot}	
\vspace{-.5cm}
\end{figure}

For any communication system, the effects of additive noise are predominant at low SNR and the effects of noise due to fading dominate at high SNR. Since adaptive network coding scheme attempts at reducing the effects of fading for the $n$-way relaying scenario by removing a part of singular fade states, (a) performs better than (b) at high SNR. This also implies that the performance of the scheme in the MA phase (removal of singular fade spaces takes place in the MA phase) is predominant at higher SNRs, and the performance of the scheme in the BC phase dominates at lower SNRs. Also, as the number of user nodes $n$ increases, the SNR at which the performance of adaptive network coding improves over the performance of non-adaptive network coding increases, as can be seen in the plot, since the size of the received constellation in the MA phase and hence the size of the constellation used in the BC phase increases with increasing values of $n$.

%
%
\section{Conclusion}
\vspace{-.cm}
We consider the $n$-way wireless relaying scenario, where $n$ nodes operate in half-duplex mode and transmit points from the same $M$-PSK constellation. Information exchange is made possible using just two channels uses, unlike the existing work done for the case, to the best of our knowledge. The Relay node clusters the $M^n$ possible transmitted tuples $\left(x_{1},x_{2},...,x_n\right)$ into various clusters depending on the fade states such that \textit{the exclusive law} is satisfied and some of the singular fade subspaces are removed. This necessary requirement of satisfying the exclusive law is shown to be the same as the clustering being represented by a $n$-fold Latin Hyper-Cube of side $M$. The size of the clustering utilizing modified clustering may not be the best that can be achieved, and it might be possible to fill the array with lesser symbols. 

\begin{center}
\textsc{Acknowledgments}
\end{center}
\vspace{-.15cm}
The first author would like to thank her colleague Vijayvaradharaj T Muralidharan at the Department of Electrical and Communication Engineering, Indian Institute of Science for his feedback and suggestions related to this paper.

\end{document}